\newtheorem{theorem}{Theorem}
\newcommand{\comm}[1]{}
\newcommand{\imj}{\mathsf{j}}
\newcommand{\Zmax}{Z_{\mathrm{max}}}
\newcommand{\Zmin}{Z_{\mathrm{min}}}
\begin{document}
\title{FALP: Fast beam alignment in mmWave systems with low-resolution phase shifters}
\author{{\IEEEauthorblockN{Nitin Jonathan Myers, {\it Student Member, IEEE},\\ Amine Mezghani, {\it Member, IEEE}, and 
Robert W. Heath Jr., {\it Fellow, IEEE}. }}
\thanks{ N. J. Myers (nitinjmyers@utexas.edu) and R. W. Heath Jr. (rheath@utexas.edu) are with the  Wireless Networking and Communications  Group, The University of Texas at Austin, Austin,
TX 78712 USA. A. Mezghani (Amine.Mezghani@umanitoba.ca) is with the Department of Electrical and Computer Engineering, University of Manitoba, Winnipeg, MB, R3T 5V6, Canada. This material is based upon work supported by the National Science
Foundation under Grant numbers NSF-CNS-1702800, NSF-CNS-1731658, and NSF ECCS-1711702.}}
\maketitle 
\begin{abstract}
Millimeter wave (mmWave) systems can enable high data rates if the link between the transmitting and receiving radios is configured properly. Fast configuration of mmWave links, however, is challenging due to the use of large antenna arrays and hardware constraints. For example, a large amount of training overhead is incurred by exhaustive search-based beam alignment in typical mmWave phased arrays. In this paper, we present a framework called FALP for Fast beam Alignment with Low-resolution Phase shifters. FALP uses an efficient set of antenna weight vectors to acquire channel measurements, and allows faster beam alignment when compared to exhaustive scan. The antenna weight vectors in FALP can be realized in ultra-low power phase shifters whose resolution can be as low as one-bit. From a compressed sensing (CS) perspective, the CS matrix designed in FALP satisfies the restricted isometry property and allows CS algorithms to exploit the fast Fourier transform. The proposed framework also establishes a new connection between channel acquisition in phased arrays and magnetic resonance imaging.
\end{abstract}
\begin{IEEEkeywords} 
Perfect arrays, 2D sparse recovery, one-bit phase shifters, magnetic resonance imaging, mm-Wave
\end{IEEEkeywords}
\IEEEpeerreviewmaketitle
\section{Introduction}
Millimeter wave (mmWave) communication, currently used in the IEEE 802.11ad standard and 5G, can support Gbps data rates by exploiting the large amount of bandwidth available at mmWave frequencies \cite{mmintro}. MmWave systems typically use large antenna arrays and directional beams to achieve such high data rates. The process of beam alignment, i.e., finding the best directional beam, can be challenging in mmWave hardware architectures like the phased array \cite{heathoverview}. As phased arrays have fewer radio frequency (RF) chains than the antennas, standard techniques like exhaustive search-based beam alignment can result in a substantial training overhead when applied to mmWave systems\cite{heathoverview}.  
\par Compressed sensing (CS) is a technique that allows reconstructing a sparse signal with fewer measurements when compared to the dimension of the signal \cite{csintro}. Due to the sparse nature of mmWave channels in an appropriate dictionary, CS is a promising solution for mmWave channel estimation or beam alignment with sub-Nyquist channel measurements \cite{heathoverview}. The channel measurements in CS are obtained by projecting the channel onto a lower dimensional subspace using a CS matrix \cite{csintro}. The channel is then recovered from the lower dimensional projections using optimization techniques that exploit sparsity of the channel \cite{cschest,javiCS}. The guarantees on the recovery of sparse signals and the complexity of the reconstruction algorithms, depend on the choice of the CS matrix used to obtain these projections. The restricted isometry property (RIP) \cite{candes2008restricted} is one metric that characterizes the efficiency of a CS matrix in recovering sparse signals. Unfortunately, several random CS matrices that are known to satisfy the RIP cannot be realized in phased arrays due to hardware constraints \cite{swiftlink}. To this end, prior work has used random IID phase shift-based CS matrices for sub-Nyquist mmWave channel estimation and beam alignment \cite{cschest,javiCS}. CS techniques that use the random phase shift design, however, cannot exploit fast transforms and may result in a high complexity when applied to large antenna systems. Structured CS algorithms are promising for large antenna systems as they can perform sparse recovery with a reduced computational complexity \cite{kramer}.
\par Convolutional compressed sensing (CCS) is one form of structured CS in which the signal of interest is projected onto fewer circulantly shifted versions of a base sequence \cite{convCS}. The convolutional structure of CS matrices in CCS allows sparse recovery algorithms to exploit the fast Fourier transform. In CCS of vectors, the choice of the base sequence is critical for the successful recovery of the sparse signal \cite{convCS}. Prior work has shown that optimal base sequences of a certain length exist when the size of the alphabet is sufficiently large \cite{convCS,perf_seq_exist}. For example, ideal base sequences of length $16$ exist in $\{1,\imj,-1,-\imj\}^{16}$, where $\imj=\sqrt{-1}$. Base sequences of the same length, however, do not exist in $\{1,-1\}^{16}$ \cite{perf_seq_exist}. For CCS in phased arrays, the size of the alphabet is determined by the resolution of the phase shifters. As typical mmWave phased arrays use low resolution phase shifters, applying CCS in such systems can be challenging. The  difficulty lies in finding optimal base sequences that are compatible with the hardware. In this paper, we construct efficient structured CS matrices that satisfy the RIP, and can be realized in arbitrarily large phased arrays whose resolution can be as low as one-bit.
\par We propose a novel 2D-CCS framework called FALP for convolutional compressed sensing of mmWave channels with planar phased arrays. The channel measurements in our 2D-CCS framework are obtained by projecting the channel matrix onto 2D-circulant shifts of a base matrix. Similar to standard vector CCS \cite{convCS}, the performance of 2D-CCS depends on the choice of the base matrix. As the number of hardware compatible matrices in phased arrays is exponential in the array dimensions, a brute-force approach to find the best base matrix is not practical for large arrays. In our prior work called Swift-Link \cite{swiftlink}, we used Zadoff-Chu (ZC) sequences for efficient CCS-based channel estimation in planar arrays. For linear phased arrays, the efficiency of CCS-based channel estimation with ZC sequences was studied in \cite{struct_rand_phased_array}. The 2D-CCS equivalent of the sequences in  \cite{swiftlink} and \cite{struct_rand_phased_array}, is a base matrix that is an outer product of two ZC sequences. Realizing base matrices using ZC-sequences, however, requires a phase shift resolution that is logarithmic in the number of antennas \cite{struct_rand_phased_array}. In large antenna arrays, it can be difficult to meet such a requirement as the use of high resolution phase shifters can result in a higher hardware cost and a higher power consumption. 
\par One-bit phased arrays are promising in terms of the hardware complexity, cost and power consumption \cite{wang2018hybrid,onebitlis}. The hardware associated with a one-bit phase shifter can be as simple as a combination of a switch and an inverter \cite{onebit_hw}. Both these components consume less power than a typical high resolution phase shifter. For instance, a one-bit phase shifter may require $10\, \mathrm{mW}$ while a four-bit phase shifter may need $45\, \mathrm{mW}$ \cite{onebit_hw}. The binary phase control capability in one-bit phased arrays, however, complicates the design of efficient base matrices for 2D-CCS. In this paper, we arrive at a surprising result that ideal base matrices for 2D-CCS exist for infinite array dimensions even over a binary alphabet. This result allows applying FALP to large phased arrays with one-bit phase shifters, and makes it a candidate solution for next generation wireless systems. We summarize our main contributions as follows.
\begin{itemize}
  \item We propose a compressive channel acquisition technique that acquires channel measurements with fewer 2D-circulant shifts of a base matrix. We determine the properties of base matrices that result in efficient 2D-CCS and are compatible with phased arrays.
  \item We show that perfect arrays \cite{PBA,PQA} can be used as efficient base matrices in FALP. For a given resolution of phase shifters, such arrays exist for a family of array dimensions. For other cases, we derive the sub-optimality gap of CS algorithms when non-ideal base matrices are used in our framework. 
\item We establish an equivalence between CS-based beam alignment with FALP and CS in magnetic resonance imaging (MRI) \cite{sparseMRI}. The equivalence allows direct application of k-space trajectories in MRI to the beam alignment problem. For a random trajectory, we derive the probability of successful beam alignment using zero filling-based reconstruction in MRI. We use this equivalence to show how low complexity beam alignment can be performed using a single 2D-fast Fourier transform.
\item Using simulations, we show that the use of perfect arrays in 2D-CCS results in better beam alignment when compared to 2D-CCS with a randomly chosen base matrix. We also show that the proposed CS technique performs slightly better than the common random phase shift-based approach, for a significantly reduced computational complexity.
\end{itemize}
We would like to highlight that Swift-Link \cite{swiftlink} and FALP solve two independent problems. On the one hand, FALP develops efficient base matrices for 2D-CCS in low resolution phased arrays. On the other hand, Swift-Link designs trajectories to perform CS-based beam alignment that is robust to carrier frequency offset (CFO). For simplicity of exposition, we assume perfect frame timing and carrier synchronization. Nevertheless, Swift-Link's trajectory can be used in FALP for CFO robust beam alignment in low-resolution phased arrays.
\par The rest of the paper is organized as follows. In Section~\ref{sec:syschanmodel}, we describe the system and channel model in a planar phased array-based system. Section~\ref{sec:core_CCS} is the main technical section of the paper, where we explain how channel measurements are acquired in 2D-CCS and introduce the notion of base matrix. We mathematically show that perfect arrays \cite{PBA,PQA} are good candidates for ideal base matrices, and describe FALP in Section~\ref{sec:core_CCS}. In Section~\ref{sec:MRI_BA}, we explain how compressive beam alignment in FALP is analogous to CS in MRI. We use the analogy to develop a beam alignment technique that does not require any iterative optimization. Simulation results are presented in Section~\ref{sec:simulations}, before the conclusions and future work in Section~\ref{sec:concl_fw}.  
\par \textbf{Notation}$:$ $\mathbf{A}$ is a matrix, $\mathbf{a}$ is a column vector and $a, A$ denote scalars. Using this notation $\mathbf{A}^T,\mathbf{A}^{\text{c}}$ and $\mathbf{A}^{\ast} $ represent the transpose, conjugate and conjugate transpose of $\mathbf{A}$. We use $\mathrm{diag}\left(\mathbf{a}\right)$ to denote a diagonal matrix with entries  of $\mathbf{a}$ on its diagonal. The scalar $a\left[m \right]$ denotes the $m^{\mathrm{th}}$ element of $\mathbf{a}$. The $\ell_2$ norm of $\mathbf{a}$ is denoted by $\Vert \mathbf{a} \Vert_2$. The $k^{\mathrm{th}}$ row and the $\ell^{\mathrm{th}}$ column of $\mathbf{A}$ are denoted by $\mathbf{A}(k,:)$ and $\mathbf{A}(:,\ell)$. The scalar $\mathbf{A}\left(k,\ell\right)$ or $\mathbf{A}_{k,\ell}$ denotes the entry of $\mathbf{A}$ in the $k^{\mathrm{th}}$ row and the ${\ell}^{\mathrm{th}}$ column. The matrix $|\mathbf{A}|$ contains the element-wise magnitude of $\mathbf{A}$, i.e., $|\mathbf{A}|_{k, \ell}=|\mathbf{A}_{k,\ell}|$. The $\ell_1$ norm and the Frobenius norm of $\mathbf{A}$ are denoted by $\Vert \mathbf{A} \Vert_1$ and $\Vert \mathbf{A} \Vert_{\mathrm{F}}$. The inner product of two matrices $\mathbf{A}$ and $\mathbf{B}$ is defined as $\langle \mathbf{A},\mathbf{B}\rangle =\sum_{k,\ell}\mathbf{A}\left(k,\ell \right){\mathbf{B}^{\text{c}}}\left(k,\ell\right)$. We use $\mathbf{1}$ to denote an all-ones matrix and $\mathbf{I}$ to denote the identity matrix. The symbols $\odot$ and $\circledast$ are used for the Hadamard product and 2D circular convolution \cite{imageprocess}. 
\section{System and channel model} \label{sec:syschanmodel}
In this section, we describe a planar phased antenna array system considered in FALP. To explain our framework, we assume a narrowband mmWave system and focus on the transmit beam alignment problem. We extend our algorithm to the wideband setting in Section \ref{sec:simulations}. 
\subsection{System model} \label{sec:sysmodel}
\begin{figure}[h]
\centering
\includegraphics[trim=1.25cm 0cm 0.8cm 0cm, width=0.48\textwidth]{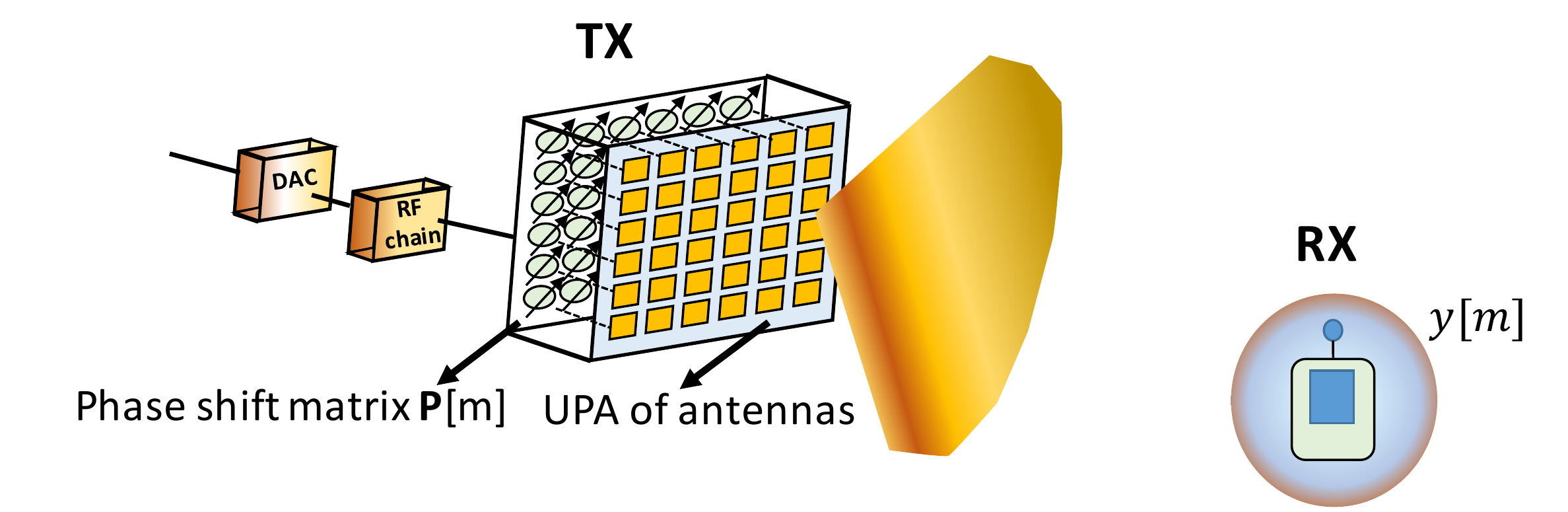}
\caption{\small Channel acquisition in a phased array system with a uniform planar array of antennas at the transmitter (TX). The channel measurements at the receiver are used to estimate the best phase shift configuration at the TX.}
\normalsize
  \label{fig:architect}
\end{figure}
We consider an analog beamforming system in which the transmitter (TX) is equipped with a uniform planar array (UPA) of antennas as shown in Fig.~\ref{fig:architect}. For ease of notation, we consider an equal number of antennas, i.e., $N$, along each of the azimuth and elevation dimensions of the UPA. Our framework can also be extended to other rectangular arrays by using array response vectors of appropriate dimensions in the formulation. The beamforming architecture at the TX uses a single radio frequency (RF) chain as shown in Fig.~\ref{fig:architect}. Each antenna element in the UPA is connected to the RF chain through a digitally controlled phase shifter.
By appropriately configuring the \comm{collection of}phase shifters, the TX can perform directional transmission \cite{heathwicomm}. We define the set $\mathcal{I}_{J}=\{0,1,2,\cdots,J-1\}$. The resolution of each phase shifter is assumed to be $q$-bits; the set of possible phase shifts is defined as $\mathbb{Q}_q=\left\{e^{\imj 2 \pi k / 2^q} /N : k \in \mathcal{I}_{2^q} \right\}$. As each antenna in the UPA is connected to a unique phase shifter, it is possible to configure $N^2$ phase shifters. Therefore, the phase shift matrix applied to the phased array at the TX is constrained to be an element in $\mathbb{Q}^{N \times N}_q$. The transmit beam alignment problem is to determine a phase shift matrix at the TX that maximizes the SNR at the receiver (RX).  
\par A possible approach to perform beam alignment is to estimate a reasonable approximation of the channel and use it to configure the phased array. For simplicity of exposition, we assume a single antenna at the RX and focus on the transmit beam alignment problem. Our framework can be extended to settings with UPAs at both the TX and the RX, by using fourth order tensors to model the channel. We index the antenna element in the $i^{\mathrm{th}}$ row and the $j^{\mathrm{th}}$ column of the transmit array as $(i,j)$. For an $N \times N$ UPA, $i \in \mathcal{I}_N$ and $j \in \mathcal{I}_N$. Let $\mathbf{H} \in \mathbb{C}^{N \times N}$ be the channel matrix between the UPA at the TX and the receive antenna. Specifically, $\mathbf{H}(i,j)$ represents the channel coefficient between the $(i,j)^{\mathrm{th}}$ antenna in the UPA and the antenna at the RX. The TX uses different phase shift configurations across multiple training slots for the RX to obtain channel measurements. 
\par In the $m^{\mathrm{th}}$ training slot, the TX applies the phase shift matrix $\mathbf{P}[m] \in \mathbb{Q}_q^{N \times N}$ to its phased array, and the RX acquires the channel measurement $y[m]$. We use $M$ to denote the total number of channel measurements acquired by the RX. In this paper, we assume perfect frame timing and carrier synchronization. Our assumption is valid in cellular scenarios where synchronization is performed using separate control channels. With the perfect synchronization assumption, the $m^{\mathrm{th}}$ channel measurement is
\begin{equation}
y[m]=\langle \mathbf{H}, \mathbf{P}[m] \rangle + v[m],
\label{eq:sysmod}
\end{equation}
where $v[m]\sim \mathcal{N}_{\mathrm{c}}\left(0,\sigma^2\right)$ is additive white Gaussian noise. As the measurement in \eqref{eq:sysmod} is a scalar projection of $\mathbf{H}$, estimating a generic $N\times N$ channel matrix requires $M=N^2$ channel measurements. Exhaustive beam search is one such approach that obtains the projections of $\mathbf{H}$ on all the $N^2$ elements of the 2D-discrete Fourier transform (2D-DFT) dictionary \cite{heathoverview}. Such a solution, however, does not scale well with the array dimensions. In this paper, we propose a novel set of phase shift matrices, $\{\mathbf{P}[m]\}^{M-1}_{m=0}$, for compressive channel acquisition. We prove that a good approximation of mmWave channels can be obtained from $M=\mathcal{O}(\mathrm{log} N)$ channel measurements that are acquired using the proposed set. We also show that CS algorithms that use the proposed design have a lower computational complexity than those that use the common random phase shift-based design \cite{cschest,javiCS}.
\subsection{Channel model}
We consider a geometric-ray-based model for the channel matrix $\mathbf{H}$\cite{heathwicomm}. Let $\gamma_{k}$, $\theta_{e,k}$ and $\theta_{a,k}$ denote the complex ray gain, elevation angle-of-departure and azimuth angle-of-departure of the $k^{\mathrm{th}}$ ray. We define the beamspace angles $\omega_{a,k}=\pi \, \mathrm{sin}\, \theta_{e,k} \mathrm{sin}\, \theta_{a,k}$ and $\omega_{e,k}=\pi \, \mathrm{sin}\, \theta_{e,k} \mathrm{cos}\, \theta_{a,k} $. We define the  Vandermonde vector $\mathbf{a}\left(\omega \right) \in \mathbb{C}^{N\times 1}$ as 
\begin{equation}
\mathbf{a}\left(\omega \right)=\left[1\,, e^{\imj \omega}\,, e^{ \imj 2\omega}\,, \cdots\,, e^{\imj (N-1)\omega}\right]^{T}.
\end{equation}
The wireless channel for a half wavelength spaced UPA in the baseband is given by
\begin{equation}
\mathbf{H}=\sum_{k=1}^{K}\gamma_{k}\mathbf{a} \left(\omega_{e,k}\right)\mathbf{a}^{T}\left(\omega_{a,k}\right).
\label{eq:nbchannel}
\end{equation}
As large antenna arrays are used in typical mmWave settings, the dimension of the channel, i.e., $N^2$, can be large in mmWave systems when compared to conventional lower frequency systems.
\par Channel matrices at mmWave are sparse in a well chosen dictionary, because of the propagation characteristics of the environment \cite{heathoverview}. For UPAs, the 2D-DFT basis is often chosen for a sparse representation of $\mathbf{H}$ \cite{beam2DDFTsparse}. We use $\mathbf{U}_N$ to denote the standard unitary DFT matrix of size $N \times N$. Let $\mathbf{X} \in \mathbb{C}^{N \times N}$ denote the inverse 2D-DFT of $\mathbf{H}$, such that   
\begin{equation}
\mathbf{H}= \mathbf{U}_{N} \mathbf{X} \mathbf{U}_{N}.
\label{eq:mimoangledom}
\end{equation}   
The unitary nature of the DFT implies that $\mathbf{X}=\mathbf{U}^{\ast}_{N} \mathbf{H} \mathbf{U}^{\ast}_{N}$. The matrix $\mathbf{X}$ is called the beamspace channel as it contains the received measurements when different directional 2D-DFT beams are used at the TX \cite{beam2DDFTsparse}. The sparsity of the mmWave channel in the angle domain translates to the sparsity of the beamspace channel matrix $\mathbf{X}$. As the beamspace angles-of-departure (AoD) in the channel may not align exactly with those corresponding to the DFT dictionary, there can be leakage effects in the 2D-DFT representation \cite{heathoverview}. Therefore, the matrix $\mathbf{X}$ is approximately sparse. In such a case, dictionaries that use a finer AoD domain representation can be used for a sparser representation of $\mathbf{H}$ \cite{cschest}. Using such a dictionary, however, increases the dimensionality of the CS problem. For our analysis, we consider $\mathbf{X}$ to be perfectly sparse, while our simulation results are for the realistic case where $\mathbf{X}$ is approximately sparse.
\begin{figure*}[h!]
\centering
\includegraphics[trim=2.5cm 0cm 3cm 0cm, width=0.9 \textwidth]{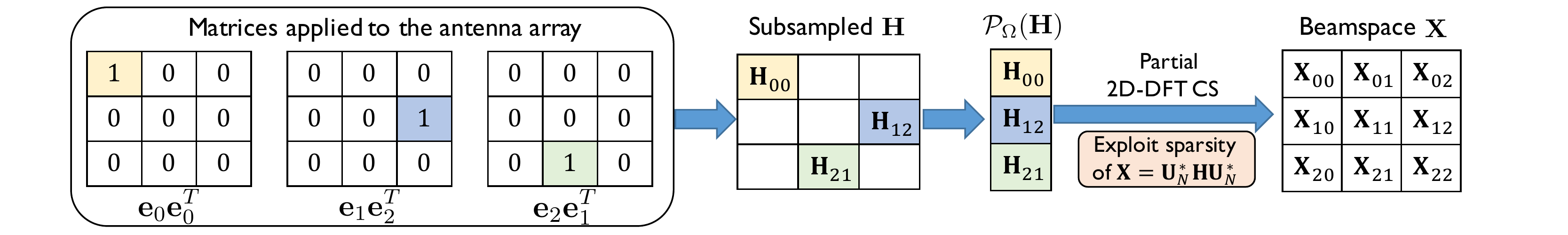}
\caption{\small The matrices used to acquire channel projections in partial 2D-DFT CS are 2D-circulant shifts of $\mathbf{e}_0\mathbf{e}^T_0$. Here, the set of circulant shifts is $\Omega=\{(0,0), (1,2),(2,1)\}$. The vector $\mathcal{P}_{\Omega}(\mathbf{H})$ is a subsampled version of $\mathbf{H}$ at the locations in $\Omega$. Partial 2D-DFT CS estimates the sparse matrix $\mathbf{X}$  from its subsampled 2D-DFT, i.e.,  $\mathcal{P}_{\Omega}(\mathbf{H})$.
}
\normalsize
  \label{fig:p2ddft_illus}
\end{figure*}
\section{Convolutional CS in planar arrays}\label{sec:core_CCS}
\par In this section, we explain the main motivation for 2D-CCS, and describe the notions of the base matrix and the sub-sampling set in 2D-CCS. Then, we identify the conditions on the base matrix that minimize the channel reconstruction error with 2D-CCS. Finally, we show that perfect arrays over small alphabets \cite{PBA,PQA} can be used as base matrices, for efficient 2D-CCS in low resolution phased arrays.
\subsection{Motivation for 2D-CCS}
\par A possible approach to acquire measurements in CS is to obtain fewer projections of the sparse signal in an appropriate basis \cite{csintro}. For example, CS can efficiently recover a sparse matrix from its subsampled 2D-DFT \cite{csintro}; the reconstruction problem in this case is known as a partial 2D-DFT CS problem \cite{rauhut2010compressive}. Partial 2D-DFT CS has a lower complexity and may achieve better signal reconstruction, when compared to other CS techniques \cite{partial2DDFTCS,ZCglobecom}. In the context of mmWave channels, partial 2D-DFT CS can estimate the sparse matrix $\mathbf{X}$ from fewer samples of its 2D-DFT, i.e., $\mathbf{H}$. An illustration of the reconstruction is shown in Fig. \ref{fig:p2ddft_illus}. The direct application of partial 2D-DFT CS in mmWave phased arrays, however, is challenging. The difficulty arises because $\mathbf{H}$ cannot be directly subsampled using phased arrays, as required by partial 2D-DFT CS. For instance, acquiring $\mathbf{H}(0,0)=\langle \mathbf{H}, \mathbf{e}_0\mathbf{e}^T_0 \rangle$ in a single training slot requires the application of $\mathbf{e}_0\mathbf{e}^T_0$ to the antenna array. The matrix $\mathbf{e}_0\mathbf{e}^T_0$, however, does not belong to the feasible set, i.e., $\mathbb{Q}^{N \times N}_q$. Although introducing switches after each phase shifter can help realize $\mathbf{e}_0\mathbf{e}^T_0$, the SNR in the resulting channel measurement can be poor. This is because $\mathbf{e}_0\mathbf{e}^T_0$ uses a single transmit antenna and per-antenna power constraints limit the power that can be transmitted from an antenna. In this paper, we develop a novel 2D-CCS technique that overcomes these practical challenges, and has all the advantages of partial 2D-DFT CS.   
\par The motivation for 2D-CCS comes from the observation that the matrices used to obtain channel projections in partial 2D-DFT CS are 2D-circulant shifts of a particular matrix. It can be noticed from Fig. \ref{fig:p2ddft_illus} that the matrices used to acquire channel projections, i.e., $\mathbf{e}_0\mathbf{e}^T_0$, $\mathbf{e}_1\mathbf{e}^T_2$, and $\mathbf{e}_2\mathbf{e}^T_1$, are all 2D-circulant shifts of $\mathbf{e}_0\mathbf{e}^T_0$. 
The channel projections in our framework are acquired by applying 2D-circulant shifts of a matrix $\mathbf{P} \in \mathbb{Q}^{N \times N}_q$, instead of $\mathbf{e}_0\mathbf{e}^T_0$, to the phased array. We define $\mathbf{P}$ as the base matrix in 2D-CCS. Due to the constant modulus nature of the matrices in $ \mathbb{Q}^{N \times N}_q$, our 2D-CCS framework uses all the antennas in the phased array for compressive channel acquisition.
\par Now, we explain how compressive channel measurements are obtained in 2D-CCS. In the $m^{\mathrm{th}}$ training slot, the TX applies a $(r[m],c[m])$ 2D-circulant shift of $\mathbf{P}$ to its phased array. The matrix $\mathbf{P}[m]$ is generated by circulantly shifting  $\mathbf{P}$ by $r[m]$ units along the rows and $c[m]$ units along the columns. We define $\Omega$ as a set that contains the 2D-circulant shifts used to acquire the $M$ channel measurements, i.e., $\Omega=\{ (r[m],c[m]) \}_{m=0}^{M-1}$. In this paper, the set of circulant shifts, i.e., $\Omega$, is constructed by sampling $M$ distinct coordinates at random from $\mathcal{I}_N \times \mathcal{I}_N$. We define $\mathbf{J} \in \mathbb{R}^{N \times N}$ as a circulant delay matrix with its first row as $(0,1,0,0,..,0)$. The subsequent rows of $\mathbf{J}$ are generated by right circulantly shifting the previous row by $1$ unit. Using this notation, we define the $d$ circulant delay matrix as $\mathbf{J}_d=\mathbf{J} \cdot \mathbf{J}\cdots \mathbf{J}$ ($d$ times). 
In 2D-CCS, the matrix applied to the phased array in the $m^{\mathrm{th}}$ slot is 
\begin{equation}
\mathbf{P}[m]=\mathbf{J}^T_{r[m]}\mathbf{P}\mathbf{J}_{c[m]}.
\label{eq:Pn_from_P}
\end{equation} 
An illustration of the compressive channel acquisition procedure using 2D-CCS, for a base matrix $\mathbf{P}$ and a subsampling set $\Omega=\{(0,0),(1,2),(2,1)\}$, is shown in Fig. \ref{fig:2dccs_illus}. The base matrix determines the success of 2D-CCS-based recovery. As an example, consider a 2D-CCS technique that uses $\mathbf{P}=\mathbf{1}/N$. Channel acquisition with such a matrix results in the same measurement, i.e., mean of the entries in $\mathbf{H}$, for any 2D-circulant shift. As $\mathbf{H}$ cannot be estimated just from its mean, 2D-CCS with $\mathbf{P}=\mathbf{1}/N$ fails. In Sections ~\ref{sec:ccstransformed} and ~\ref{sec:conditions_base}, we use ideas from partial 2D-DFT CS to study how the choice of $\mathbf{P}$ impacts the performance of 2D-CCS.
\begin{figure*}[h!]
\centering
\includegraphics[trim=2.5cm 0cm 3cm 0cm, width=0.9 \textwidth]{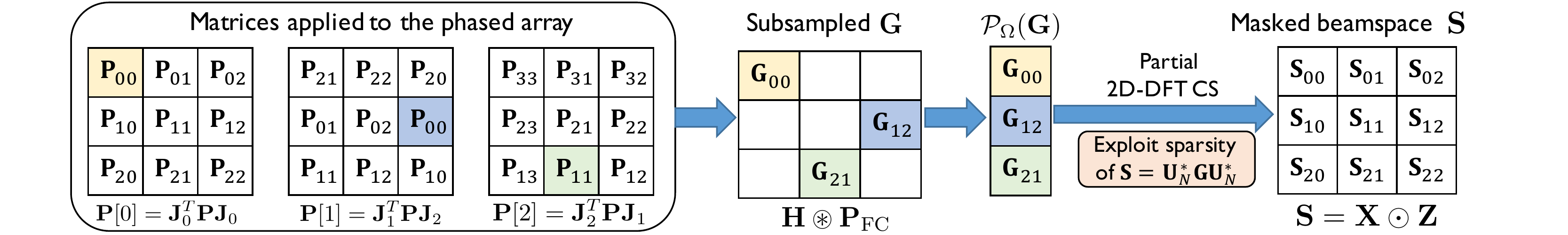}
\caption{\small Channel projections in 2D-CCS are acquired using 2D-circulant shifts of a base matrix $\mathbf{P}$. The matrix $\mathbf{P}_{\mathrm{FC}}$ is a flipped and conjugated version of $\mathbf{P}$, and $\mathbf{Z}$ is the scaled inverse 2D-DFT of $\mathbf{P}_{\mathrm{FC}}$. Partial 2D-DFT CS can be used to estimate the sparse masked beamspace matrix $\mathbf{S}$ from a sub-sampled version of its 2D-DFT, i.e., $\mathcal{P}_{\Omega}(\mathbf{G})$.}
\normalsize
  \label{fig:2dccs_illus}
\end{figure*}
\subsection{Transforming convolutional CS to partial 2D-DFT CS}\label{sec:ccstransformed}
\par We derive a compact representation of the channel measurements in 2D-CCS. In the $m^{\mathrm{th}}$ training slot, the TX applies $\mathbf{P}[m]$ to its phased array and the RX receives  
\begin{equation}
\label{eq:y_innp}
y[m]=\langle\mathbf{H},\mathbf{J}^T_{r[m]}\mathbf{P}\mathbf{J}_{c[m]} \rangle +v[m].
\end{equation}
We use $k_N$ to denote the modulo$-N$ remainder of $k$. The $m^{\mathrm{th}}$ channel measurement is then
\begin{equation}
\label{eq:y_innp_exp}
y[m]=\sum^{N-1}_{k=0}\sum^{N-1}_{\ell=0} \mathbf{H}(k,\ell) \mathbf{P}^{\text{c}}\left((k-r[m])_{N},(\ell-c[m])_{N}\right) +v[m].
\end{equation}
We define $\mathbf{P}_{\mathrm{FC}}$ as a flipped and conjugated version of $\mathbf{P}$, i.e., 
\begin{equation}
\mathbf{P}_{\mathrm{FC}} (k,\ell)= \mathbf{P}^{\text{c}}(-k_N,-\ell_N)\;\; \forall k, \ell \in \mathcal{I}_N.
\end{equation}
By the definition of 2D-circular convolution \cite{imageprocess}, it can be observed from \eqref{eq:y_innp_exp} that
\begin{equation}
\label{eq:y_circ_conv}
y[m]=(\mathbf{H}\circledast \mathbf{P}_{\mathrm{FC}})_{r[m],c[m]} +v[m].
\end{equation} 
In 2D-CCS, the RX acquires the $(r[m],c[m])$ entry of $\mathbf{H}\circledast \mathbf{P}_{\mathrm{FC}}$, when the TX applies an  $(r[m],c[m])$ 2D-circulant shift of $\mathbf{P}$ to its phased array. In $M$ training slots, the TX applies $M$ distinct 2D-circulant shifts of $\mathbf{P}$ according to the coordinates in $\Omega$. The vector of $M$ channel measurements received at the RX is then 
\begin{equation}
\label{eq:subsampled_conv}
\mathbf{y}=\mathcal{P}_{\Omega}(\mathbf{H} \circledast \mathbf{P}_{\mathrm{FC}})+\mathbf{v}.
\end{equation}    
The measurement vector in 2D-CCS is a subsampled convolution of $\mathbf{H}$ and $\mathbf{P}_{\mathrm{FC}}$.  
\par Now, we show how channel measurements in 2D-CCS can be interpreted as partial 2D-DFT measurements of a transformed beamspace. We define the convolved channel $\mathbf{G}$ as 
\begin{align}
\label{eq:modchannel}
\mathbf{G}=\mathbf{H} \circledast \mathbf{P}_{\mathrm{FC}}.
\end{align}
The spectral mask corresponding to the base matrix $\mathbf{P}$ is defined as 
\begin{equation}
\label{eq:specmask_defn}
\mathbf{Z}=N\mathbf{U}^{\ast}_N \mathbf{P}_{\mathrm{FC}} \mathbf{U}^{\ast}_N.
\end{equation}
Similar to the definition of the beamspace $\mathbf{X}$, we define the masked beamspace as  
\begin{align}
\label{eq:maskbeam_defn}
\mathbf{S}&=\mathbf{U}^{\ast}_N \mathbf{G} \mathbf{U}^{\ast}_N.
\end{align}
An interesting property of the Fourier transform is that the 2D-DFT of $\mathbf{H}\circledast \mathbf{P}_{\mathrm{FC}}$ is a scaled element-wise product of the 2D-DFTs of $\mathbf{H}$ and $\mathbf{P}_{\mathrm{FC}}$ \cite{imageprocess}. We use this property to rewrite \eqref{eq:maskbeam_defn} as
\begin{align}
\label{eq:maskbeam_hdmd1}
\mathbf{S}&=(\mathbf{U}^{\ast}_N \mathbf{H} \mathbf{U}^{\ast}_N) \odot (N\mathbf{U}^{\ast}_N \mathbf{P}_{\mathrm{FC}} \mathbf{U}^{\ast}_N)\\
\label{eq:maskbeam_hdmd2}
&=\mathbf{X}\odot \mathbf{Z}.
\end{align}
The transformations that relate the matrices $\mathbf{H}$, $\mathbf{X}$, $\mathbf{G}$, and $\mathbf{S}$ are shown in Fig. \ref{fig:spec_mask_concept}. The matrix $\mathbf{S}$ is called the masked beamspace because it is an element-wise multiplication of the beamspace $\mathbf{X}$ and the spectral mask $\mathbf{Z}$. As the element-wise multiplication of a sparse matrix with any other matrix is a sparse matrix, $\mathbf{S}$ is sparse under the assumption that $\mathbf{X}$ is sparse. The vector $\mathbf{y}$ can be expressed using \eqref{eq:subsampled_conv}, \eqref{eq:modchannel} and \eqref{eq:maskbeam_defn} as
\begin{equation}
\label{eq:compactmaskedmeas}
\mathbf{y}=\mathcal{P}_{\Omega} (\mathbf{U}_N \mathbf{S} \mathbf{U}_N) + \mathbf{v}.
\end{equation}
The channel measurements in \eqref{eq:compactmaskedmeas} can be interpreted as the subsampled 2D-DFT of the masked beamspace $\mathbf{S}$. In a subsampling setting, i.e., $M<N^2$, the masked beamspace can be recovered from $\mathbf{y}$, using partial 2D-DFT CS techniques that exploit the sparsity of $\mathbf{S}$. 
\begin{figure}[h]
\centering
\includegraphics[trim=0.75cm 0cm 0.75cm 0cm, width=0.25 \textwidth]{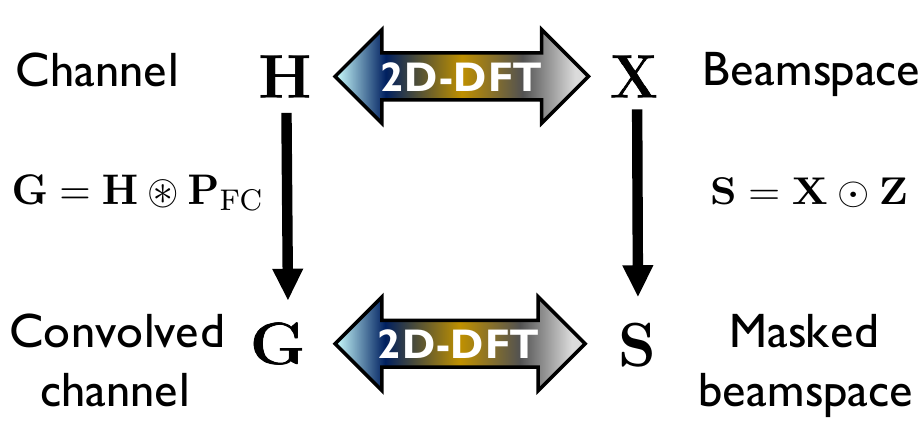}
\caption{\small A summary of the transformations relating the matrices $\mathbf{H}$, $\mathbf{X}$, $\mathbf{G}$, and $\mathbf{S}$. In this paper, we develop a base matrix $\mathbf{P}$ such that $\mathbf{H}$, or equivalently $\mathbf{X}$, can be efficiently estimated from 2D-CCS based channel measurements.}
\normalsize
  \label{fig:spec_mask_concept}
\end{figure}
\subsection{Conditions on the base matrix for efficient CS}\label{sec:conditions_base}
\par In this section, we derive guarantees on channel recovery for partial 2D-DFT CS over the masked beamspace. Using these guarantees, we identify the conditions on the base matrix for efficient 2D-CCS-based recovery of the beamspace channel $\mathbf{X}$. 
\par  The CS matrix that results from acquiring $M=\mathcal{O}(\mathrm{log}N)$ 2D-DFT samples of the sparse matrix $\mathbf{S}$ in \eqref{eq:compactmaskedmeas}, is known to satisfy the restricted isometry property with high probability \cite{kramer}. The masked beamspace $\mathbf{S}$ can be estimated from the channel measurements in \eqref{eq:compactmaskedmeas} using an $\ell_1$ optimization program \cite{csintro}
\begin{equation}
\hat{\mathbf{S}}=\!\! \begin{array}{c}
\mathrm{arg\,min\,\,}\Vert \mathbf{W}\Vert _{1} ,\,
\mathrm{s.t\,}\Vert \mathbf{y}-\mathcal{P}_{\Omega}(\mathbf{U}_{N}\mathbf{W}\mathbf{U}_{N})\Vert _{2}\leq\sqrt{M}\sigma
\end{array}.
\label{eq:l1opt}
\end{equation}
The optimization program in \eqref{eq:l1opt} encourages sparse masked beamspace solutions that are consistent with the received channel measurements \cite{csintro}. It is important to note that successful recovery of $\mathbf{S}$ does not guarantee the reconstruction of the beamspace channel, i.e., $\mathbf{X}$. The recovery of the beamspace depends on the spectral mask $\mathbf{Z}$. For example, if $\mathbf{Z}(k,\ell)=0$ for some $k$ and $\ell$, the masked beamspace component $\mathbf{S}(k,\ell)=0$. In such case, $\mathbf{X}(k,\ell)$ cannot be recovered from the spectral mask equation, i.e., $\mathbf{S}=\mathbf{Z} \odot \mathbf{X}$. To avoid such blanking effects in the masked beamspace, well conditioned spectral masks must be designed to estimate $\mathbf{X}$ from $\mathbf{S}$. 
\par We derive guarantees for compressive beamspace reconstruction using masked beamspace recovery with \eqref{eq:l1opt}. Let $\hat{\mathbf{X}}$ be a solution to the beamspace channel. As $\mathbf{S}=\mathbf{Z}\odot \mathbf{X}$, the estimate $\hat{\mathbf{X}}$ must satisfy $\hat{\mathbf{S}}=\mathbf{Z}\odot \hat{\mathbf{X}}$. For the spectral mask $\mathbf{Z}$, we define $\Zmax=\underset{k,\ell}{\mathrm{max}}\left|\mathbf{Z}(k,\ell)\right|$ and $\Zmin=\underset{k,\ell}{\mathrm{min}}\left|\mathbf{Z}(k,\ell)\right|$. We use $\left( \mathbf{A}\right)_k$ to denote the $k$ sparse representation of $\mathbf{A}$. The matrix $\left( \mathbf{A}\right)_k$ is obtained from $\mathbf{A}$ by retaining the $k$ largest entries in magnitude and setting the rest to $0$.
\begin{theorem} 
For a fixed constant $\gamma \in \left(0,1\right)$, a solution $\hat{\mathbf{X}}$ such that $\hat{\mathbf{S}}=\hat{\mathbf{X}} \odot \mathbf{Z}$ satisfies 
\begin{equation}
\bigl\Vert \mathbf{X}-\hat{\mathbf{X}}\bigl\Vert _{F}\leq C_{1}\frac{\Zmax\left\Vert \mathbf{X}-\left(\mathbf{X}\right)_{k}\right\Vert _{1}}{\sqrt{k}\Zmin}+C_{2}\frac{N\sigma}{\Zmin},
\label{eq:CSbound}
\end{equation}
with a probability of at least $1-\gamma$ if $M\geq Ck\,\mathrm{max}\left\{ 2\mathrm{log}^{3}(2k)\,\mathrm{log}(N),\,\mathrm{log}(\gamma^{-1})\right\}$. The constants $C, C_1$ and $C_2$ are independent of all the other parameters.
\label{theoremcondn}
\end{theorem}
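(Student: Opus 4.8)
\textit{Proof proposal.} The plan is to read the measurement model \eqref{eq:compactmaskedmeas} as an instance of partial 2D-DFT compressed sensing over the masked beamspace $\mathbf{S}$, invoke the restricted isometry property of the associated subsampled Fourier operator, apply the classical noise-robust $\ell_1$ recovery guarantee to bound $\Vert\mathbf{S}-\hat{\mathbf{S}}\Vert_F$, and then translate that into a bound on $\Vert\mathbf{X}-\hat{\mathbf{X}}\Vert_F$ via the spectral-mask identity $\mathbf{S}=\mathbf{Z}\odot\mathbf{X}$ from \eqref{eq:maskbeam_hdmd2}. Concretely, I would first vectorize \eqref{eq:compactmaskedmeas}: the linear map $\mathbf{W}\mapsto\mathcal{P}_{\Omega}(\mathbf{U}_N\mathbf{W}\mathbf{U}_N)$ selects the $M$ rows indexed by $\Omega$ from the $N^2\times N^2$ unitary 2D-DFT matrix $\mathbf{U}_N\otimes\mathbf{U}_N$, which is a bounded orthonormal system. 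Writing $\mathbf{\Phi}=(N/\sqrt{M})\mathcal{P}_{\Omega}(\mathbf{U}_N\cdot\mathbf{U}_N)$ for its column-normalized version and using that $\Omega$ is drawn uniformly at random, the results cited in the paragraph preceding Theorem~\ref{theoremcondn} (see \cite{kramer}) give that $\mathbf{\Phi}$ satisfies the RIP of order $2k$ with a sufficiently small constant, with probability at least $1-\gamma$, provided $M\geq Ck\,\mathrm{max}\{2\mathrm{log}^{3}(2k)\,\mathrm{log}(N),\,\mathrm{log}(\gamma^{-1})\}$.

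Conditioned on this RIP event, I would apply the standard $\ell_1$ error bound for noisy CS (e.g.\ \cite{candes2008restricted}). Rescaling \eqref{eq:l1opt}, $\hat{\mathbf{S}}$ minimizes $\Vert\mathbf{W}\Vert_1$ subject to $\Vert(N/\sqrt{M})\mathbf{y}-\mathbf{\Phi}\,\mathrm{vec}(\mathbf{W})\Vert_2\leq N\sigma$, and since $v[m]\sim\mathcal{N}_{\mathrm{c}}(0,\sigma^2)$ gives $\Vert\mathbf{v}\Vert_2\leq\sqrt{M}\sigma$ with high probability (after an innocuous adjustment of the constant, if needed), $\mathrm{vec}(\mathbf{S})$ is feasible; this rescaling is exactly why the effective noise level is $N\sigma$ rather than $\sqrt{M}\sigma$. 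The classical bound then yields
\begin{equation}
\Vert\mathbf{S}-\hat{\mathbf{S}}\Vert_F\leq c_1\frac{\Vert\mathbf{S}-(\mathbf{S})_k\Vert_1}{\sqrt{k}}+c_2N\sigma
\end{equation}
for absolute constants $c_1,c_2$.

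The passage from $\mathbf{S}$ to $\mathbf{X}$ is then elementary. From $\mathbf{S}=\mathbf{Z}\odot\mathbf{X}$ and $\hat{\mathbf{S}}=\mathbf{Z}\odot\hat{\mathbf{X}}$ one gets $\mathbf{S}-\hat{\mathbf{S}}=\mathbf{Z}\odot(\mathbf{X}-\hat{\mathbf{X}})$, hence $\Vert\mathbf{S}-\hat{\mathbf{S}}\Vert_F^2=\sum_{k,\ell}|\mathbf{Z}(k,\ell)|^2|\mathbf{X}(k,\ell)-\hat{\mathbf{X}}(k,\ell)|^2\geq\Zmin^2\Vert\mathbf{X}-\hat{\mathbf{X}}\Vert_F^2$, i.e.\ $\Vert\mathbf{X}-\hat{\mathbf{X}}\Vert_F\leq\Zmin^{-1}\Vert\mathbf{S}-\hat{\mathbf{S}}\Vert_F$. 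For the tail term, letting $T$ be the support of $(\mathbf{X})_k$ and using the best-$k$-term optimality of $(\mathbf{S})_k$, $\Vert\mathbf{S}-(\mathbf{S})_k\Vert_1\leq\sum_{(i,j)\notin T}|\mathbf{S}(i,j)|=\sum_{(i,j)\notin T}|\mathbf{Z}(i,j)||\mathbf{X}(i,j)|\leq\Zmax\Vert\mathbf{X}-(\mathbf{X})_k\Vert_1$. Substituting both inequalities into the previous display produces \eqref{eq:CSbound} with $C_1=c_1$ and $C_2=c_2$.

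I do not expect a deep obstacle: the heavy lifting, namely reducing 2D-CCS to partial 2D-DFT CS, is already carried out in Section~\ref{sec:ccstransformed}, and the genuinely new step here — converting the mask-domain estimate to the beamspace-domain estimate — is just the two one-line inequalities above. The only points that need care are bookkeeping ones: the $N/\sqrt{M}$ column normalization (which is what turns the constraint level $\sqrt{M}\sigma$ into the $N\sigma$ appearing in the statement) and checking that the cited subsampled-Fourier RIP applies to the 2D tensor-product system with uniformly random $\Omega$, which it does since a 2D-DFT is itself a bounded orthonormal system.
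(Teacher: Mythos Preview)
Your proposal is correct and follows essentially the same route as the paper's proof: invoke the partial 2D-DFT CS guarantee from \cite{kramer} to bound $\Vert\mathbf{S}-\hat{\mathbf{S}}\Vert_F$, then use the two elementary spectral-mask inequalities $\Vert\mathbf{S}-\hat{\mathbf{S}}\Vert_F\geq\Zmin\Vert\mathbf{X}-\hat{\mathbf{X}}\Vert_F$ and $\Vert\mathbf{S}-(\mathbf{S})_k\Vert_1\leq\Zmax\Vert\mathbf{X}-(\mathbf{X})_k\Vert_1$ to transfer the bound to $\mathbf{X}$. The only difference is cosmetic: you spell out the $N/\sqrt{M}$ column normalization that produces the $N\sigma$ noise level, whereas the paper absorbs this directly into its citation of \cite{kramer}.
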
 
\begin{proof}
See Section \ref{sec:proof_condn}. 
\end{proof}
For a given $\mathbf{X}$, $M$, and $\sigma$, the result in \eqref{eq:CSbound} indicates that upper bound on the channel reconstruction error, i.e., $\bigl\Vert \mathbf{X}-\hat{\mathbf{X}}\bigl\Vert _{F}$, is lower when $\Zmin$ is bounded away from $0$. The smallest entry in $|\mathbf{Z}|$ depends on the base matrix $\mathbf{P}$. Note that $\mathbf{Z}$ is the inverse 2D-DFT of $\mathbf{P}_{\mathrm{FC}}$, the flipped and conjugated version of $\mathbf{P}$. As $\Zmin \leq \Zmax$, a base matrix that achieves the smallest upper bound in \eqref{eq:CSbound} is one that has $\Zmin=\Zmax$. 
\par Now, we show that ideal base matrices in $\mathbb{Q}_q^{N \times N}$ must have a unimodular spectral mask, i.e., $\Zmax=1$ and $\Zmin=1$, for efficient 2D-CCS in phased arrays. It can be observed from \eqref{eq:specmask_defn} that the norm of the spectral mask is $\Vert\mathbf{Z}\Vert_{F}=N$ for any $\mathbf{P} \in \mathbb{Q}_q^{N \times N}$. The condition $\Zmin=\Zmax$ is achieved under the norm constraint only when all the entries of $|\mathbf{Z}|$ are equal to $1$. Designing a base matrix $\mathbf{P} \in \mathbb{Q}^{N \times  N}_q$ such that the spectral mask in \eqref{eq:specmask_defn} is unimodular, however, is a difficult problem. The main challenge in the design of $\mathbf{P}$ is due to the phase shift constraint, i.e., $\mathbf{P} \in \mathbb{Q}_q^{N \times N}$. The canonical basis element $\mathbf{e}_0\mathbf{e}^T_0$ is a good example that has a unimodular spectral mask, but lies outside  $ \mathbb{Q}_q^{N \times N}$. A brute force approach to find a matrix in $\mathbb{Q}_q^{N \times N}$ with a unimodular spectral mask is not practical as $\mathbb{Q}_q^{N \times N}$ contains a large number of matrices, i.e., $2^{qN^2}$. Prior work has considered subsampled convolution using random sequences \cite{kramer}; the 2D extension of such a technique is 2D-CCS using a $\mathbf{P}$ that is chosen at random from $\mathbb{Q}^{N \times N}_q$. A random choice for $\mathbf{P}$, however, may not result in a unimodular spectral mask. In Sec.~\ref{sec:why_perf_arrays}, we show that ideal base matrices exist for several combinations of $q$ and $N$.
\subsection{Perfect arrays as ideal base matrices} \label{sec:why_perf_arrays}
In this section, we establish the equivalence between unimodularity of the spectral mask and perfect periodic spatial autocorrelation of the base matrix. Using this equivalence, we show that perfect arrays \cite{PBA,PQA}, a class of matrices that have perfect periodic spatial autocorrelation, satisfy the properties of an ideal base matrix for 2D-CCS-based channel recovery. 
\par The duality between perfect periodic spatial autocorrelation and unimodular 2D-DFT properties is explained in Theorem \ref{theorem_array_duality}.
\begin{theorem} 
A matrix $\mathbf{P} \in \mathbb{Q}^{N \times N}_q$ has a unimodular spectral mask, i.e., $\Zmax=1$ and $\Zmin=1$, if and only if $\mathbf{P}$ has perfect periodic spatial autocorrelation, i.e.,  
\begin{equation}
\label{eq:array_duality}
\langle \mathbf{P}, \mathbf{J}_x \mathbf{P} \mathbf{J}^T_y \rangle =0\;\; \forall \, (x,y)\in \mathcal{I}_N \times \mathcal{I}_N \setminus (0,0).
\end{equation}
\label{theorem_array_duality}
\end{theorem}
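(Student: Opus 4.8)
The plan is to recognize Theorem~\ref{theorem_array_duality} as a two-dimensional, periodic instance of the Wiener--Khinchin theorem: the quantity $\langle \mathbf{P}, \mathbf{J}_x\mathbf{P}\mathbf{J}^T_y\rangle$ is precisely the 2D-circular (periodic spatial) autocorrelation of $\mathbf{P}$ at lag $(x,y)$, and its 2D-DFT is the entrywise squared magnitude of the spectral mask $\mathbf{Z}$. Once this identification is made, the statement collapses to the elementary fact that a sequence is a scaled Kronecker delta if and only if its DFT is constant.

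\emph{Step 1 (unfold the correlation).} Using $\mathbf{J}_d(i,k)=\delta_{k,(i+d)_N}$, I would first show $(\mathbf{J}_x\mathbf{P}\mathbf{J}^T_y)(i,j)=\mathbf{P}((i+x)_N,(j+y)_N)$, so that
\[
R(x,y):=\langle \mathbf{P}, \mathbf{J}_x\mathbf{P}\mathbf{J}^T_y\rangle=\sum_{i,j}\mathbf{P}(i,j)\,\mathbf{P}^{\text{c}}\!\left((i+x)_N,(j+y)_N\right)
\]
is the periodic spatial autocorrelation of $\mathbf{P}$. Since every entry of a matrix in $\mathbb{Q}^{N\times N}_q$ has magnitude $1/N$, the zero-lag value is $R(0,0)=\sum_{i,j}|\mathbf{P}(i,j)|^2=1$; this is the same energy identity that yields $\Vert\mathbf{Z}\Vert_F=N$ in the discussion preceding the theorem.

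\emph{Step 2 (Wiener--Khinchin identity).} Next I would compute the 2D-DFT of $R$: the change of variables $(u,v)=((i+x)_N,(j+y)_N)$ separates the double sum into $\bigl(\sum_{i,j}\mathbf{P}(i,j)e^{\imj 2\pi(ai+bj)/N}\bigr)\,\overline{\bigl(\sum_{u,v}\mathbf{P}(u,v)e^{\imj 2\pi(au+bv)/N}\bigr)}$, i.e.\ the squared modulus of the (scaled) 2D-DFT of $\mathbf{P}$. Separately, flipping-and-conjugating conjugates the 2D-DFT, so from \eqref{eq:specmask_defn} one gets $\mathbf{Z}(a,b)=N\,\overline{(\mathbf{U}^{\ast}_N\mathbf{P}\mathbf{U}^{\ast}_N)(a,b)}$, and therefore $|\mathbf{Z}(a,b)|^2$ equals exactly the 2D-DFT of $R$ evaluated at $(a,b)$.

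\emph{Step 3 (both implications).} If $\mathbf{P}$ has perfect periodic autocorrelation, then $R(x,y)=R(0,0)\,\delta_{(x,y),(0,0)}=\delta_{(x,y),(0,0)}$, whose 2D-DFT is the constant $1$; hence $|\mathbf{Z}(a,b)|\equiv 1$, i.e.\ $\Zmax=\Zmin=1$. Conversely, if $\Zmax=\Zmin=1$ then $|\mathbf{Z}|^2\equiv 1$, so the 2D-DFT of $R$ is identically $1$; taking the inverse 2D-DFT gives $R(x,y)=\delta_{(x,y),(0,0)}$, which is exactly \eqref{eq:array_duality}. The calculation is routine; the only thing to watch is the bookkeeping of DFT conventions (which unitary factor implements the forward versus the inverse transform, and the placement of the modulo-$N$ reductions on the shift indices), so that the flip-conjugate relation $\mathbf{U}^{\ast}_N\mathbf{P}_{\mathrm{FC}}\mathbf{U}^{\ast}_N=\overline{\mathbf{U}^{\ast}_N\mathbf{P}\mathbf{U}^{\ast}_N}$ and the Wiener--Khinchin identity come out with the scalar factors cancelling correctly and without spurious index reflections. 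I expect no genuine obstacle beyond this accounting.
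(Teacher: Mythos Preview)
Your proposal is correct and follows essentially the same route as the paper: both arguments hinge on the Wiener--Khinchin duality that the 2D-DFT of the periodic autocorrelation $R(x,y)=\langle \mathbf{P},\mathbf{J}_x\mathbf{P}\mathbf{J}^T_y\rangle$ equals $|\mathbf{Z}|^2$ (up to the flip/conjugate bookkeeping you flag), so that $R$ is a Kronecker delta iff $|\mathbf{Z}|$ is identically one. The only difference is presentational---the paper reduces $\mathbf{Z}$ to $\mathbf{Z}_{\mathrm{FC}}=N\mathbf{U}_N\mathbf{P}\mathbf{U}_N$ and then cites the literature for the autocorrelation/unimodular-DFT equivalence, whereas you spell that equivalence out directly.
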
 
\begin{proof}
See Section \ref{sec:proof_theorem_array_duality}.
\end{proof}
The problem of finding a perfect array in $\mathbb{Q}^{N \times N}_q$ over small alphabets, i.e., a small $q$, has been well investigated in \cite{PBA} and \cite{PQA}. Although there are several hardware constrained 2D-CS applications, perfect arrays over finite alphabets have not been used in the context of CS, to the best of our knowledge. The construction of perfect arrays over large alphabets, i.e., a large $q$, can be trivial. For example, a matrix that is an outer product of two Zadoff-Chu sequences satisfies the conditions in Theorem \ref{theorem_array_duality}, and is a perfect array. The ZC-based matrix, however, may not be realizable in low resolution phased arrays \cite{swiftlink,struct_rand_phased_array}. An important step towards efficient 2D-CCS in $q$-bit phased arrays, is to find matrices that are perfect arrays, i.e., matrices that satisfy the perfect periodic autocorrelation property in $\mathbb{Q}^{N\times N}_{q}$.
\par Perfect arrays over $\mathbb{Q}^{N_1 \times N_2}_q$ were constructed for binary and quaternary alphabets, i.e., for $q=\mathrm{log}_2{2}$ and $q=\mathrm{log}_2{4}$ in \cite{PBA} and \cite{PQA}. In this paper, we consider square arrays, i.e., arrays of size $N \times N$ for simplicity. We also consider the extreme case of perfect binary arrays, i.e., $q=1$, as such arrays can be implemented in phase shifters of any resolution. An example of a perfect binary array for $N=2$ is
\begin{equation}
\label{eq:PBAexample}
\mathbf{P}=\frac{1}{2} \left[\begin{matrix} 
1 & 1 \\
1 & -1 
\end{matrix}\right].
\end{equation}
The matrix in \eqref{eq:PBAexample} satisfies the perfect periodic autocorrelation property in \eqref{eq:array_duality}. The binary nature of $\mathbf{P}$ in \eqref{eq:PBAexample} allows its application to $2 \times 2$ phased arrays with a resolution of one-bit. As typical mmWave systems have large antenna arrays, it is useful to construct perfect arrays of large dimensions for efficient 2D-CCS. An interesting result from \cite{PBA} is that perfect binary arrays in $\mathbb{Q}^{N \times N}_1$ exist when $N=2^k$ or $N=3 \cdot 2^k$, where $k$ is any natural number. A recursive method to generate perfect binary arrays was provided in \cite{PBA} for square arrays and other rectangular configurations. An implementation of the construction in \cite{PBA} is available on our GitHub page \cite{PBA_implementation}. For $q=2$, several perfect arrays, for which perfect binary arrays of the same dimension do not exist, were proposed in \cite{PQA}. The arrays in \cite{PQA} can be used for efficient 2D-CCS in $2$-bit phased arrays. FALP uses perfect binary arrays in 2D-CCS, and allows the CS algorithm in \eqref{eq:l1opt} to achieve the smallest upper bound on the channel reconstruction error in \eqref{eq:CSbound}. 
\subsection{Perfect array-based compressive beam alignment in FALP} \label{sec:FALP_mainsubsection}
\par We show how channel matrices are estimated with FALP, using Fig. \ref{fig:summary_FALP}. FALP uses a partial 2D-DFT CS algorithm to estimate the masked beamspace matrix $\mathbf{S}$, from 2D-CCS-based channel measurements that are acquired with a perfect array. It can be observed from Theorem \ref{theorem_array_duality} that the spectral mask $|\mathbf{Z}|$ is unimodular for any perfect array, i.e., $|\mathbf{Z}(k, \ell)|=1\, \forall k, \ell$. In such a case, the transformation between the beamspace $\mathbf{X}$ and the masked beamspace $\mathbf{S}$ in \eqref{eq:maskbeam_hdmd2}, can be inverted using $\mathbf{X}(k,\ell)= \mathbf{S}(k,\ell) {\mathbf{Z}^{\text{c}}}(k,\ell) \, \forall k, \ell$. For a masked beamspace $\hat{\mathbf{S}}$ obtained from partial 2D-DFT CS, the beamspace matrix can be estimated as
\begin{equation}
\hat{\mathbf{X}}=\hat{\mathbf{S}} \odot {\mathbf{Z}^{\text{c}}}.
\end{equation}
The channel estimate $\hat{\mathbf{H}}=\mathbf{U}_N \hat{\mathbf{X}} \mathbf{U}_N$ is then used for beam alignment.  
\begin{figure}[h]
\centering
\includegraphics[trim=0.6cm 0cm 0.6cm 0cm, width=0.48 \textwidth]{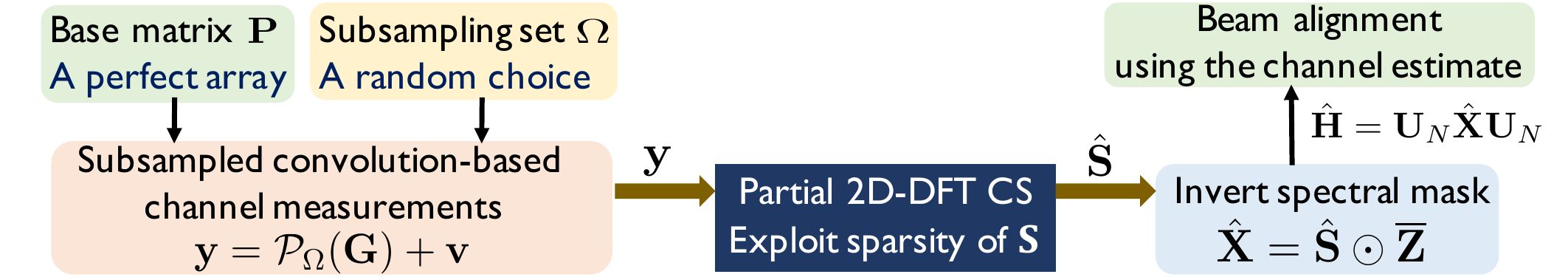}
\caption{\small Compressive channel acquisition and recovery using FALP. The channel measurements in FALP are acquired by applying 2D-circulant shifts of a perfect array according to a random subsampling set $\Omega$.}
\normalsize
  \label{fig:summary_FALP}
\end{figure}
\par Now, we explain a two step procedure for beam alignment with the channel estimate $\hat{\mathbf{H}}$. The first step relaxes the $q$-bit constraint to find an $\mathbf{F} \in \mathbb{Q}^{N \times N}_{\infty}$ that maximizes $|\langle \hat{\mathbf{H}},\mathbf{F}\rangle|$. Note that the phased array implementation requires $|\mathbf{F}_{k,\ell}|=1/N$ for every $k$ and $\ell$. By the dual norm inequality \cite{boyd2004convex}, we have $|\langle\hat{\mathbf{H}},\mathbf{F}\rangle|\leq\mathrm{max}(|\mathbf{F}|)\Vert\hat{\mathbf{H}}\Vert_{1}$. Therefore, $|\langle\hat{\mathbf{H}},\mathbf{F}\rangle|\leq \Vert\hat{\mathbf{H}}\Vert_{1}/N$. The upper bound in the dual norm inequality is achieved by an $\mathbf{F}^{\mathrm{opt}}(\beta)$ such that $|\mathbf{F}^{\mathrm{opt}}_{k,\ell}(\beta)|=1/N$ and $\mathrm{phase}(\mathbf{F}^{\mathrm{opt}}_{k,\ell}(\beta))=\beta + \mathrm{phase}(\hat{\mathbf{H}}_{k,\ell})$ for any $\beta \in (-\pi,\pi]$. The scalar $\beta$ corresponds to the global phase in $\mathbf{F}^{\mathrm{opt}}(\beta)$. As the angles in $\mathbf{F}^{\mathrm{opt}}(\beta)$ may not be integer multiples of $2\pi/2^q$, $\mathbf{F}^{\mathrm{opt}}(\beta)$ may not be directly realized in $q$-bit phased arrays. In such a case, a $q$-bit phase quantized version of $\mathbf{F}^{\mathrm{opt}}(\beta)$ can be used in the phased array, for an appropriate choice of global phase $\beta$.
 \par The second step of our beam alignment procedure finds the best $\beta$ that minimizes phase errors due to $q$-bit phase quantization of $\mathbf{F}^{\mathrm{opt}}(\beta)$. This step is important in low resolution phased arrays \cite{bias_phase,wang2018hybrid}. Let $\mathcal{Q}_q(\mathbf{F}^{\mathrm{opt}}(\beta))$ denote the $q$-bit phase quantized version of $\mathbf{F}^{\mathrm{opt}}(\beta)$. Note that $\mathcal{Q}_q(\cdot)$ performs element-wise phase quantization. The global phase term $\beta_{\mathrm{est}}$ that minimizes the phase quantization error can be expressed as
\begin{equation}
\label{eq:beta_est}
\beta_{\mathrm{est}}=\underset{\beta\in{(0,2\pi/2^{q})}}{\mathrm{argmin}}\,\Vert\mathcal{Q}_{q}(\mathbf{F}^{\mathrm{opt}}(\beta))-\mathbf{F}^{\mathrm{opt}}(\beta)\Vert_{F}.
\end{equation}
To solve for the scalar $\beta_{\mathrm{est}}$ in \eqref{eq:beta_est}, we define a phase set $\mathcal{B}$ that contains $K_{\mathcal{B}}$ uniformly spaced values in $(0,2\pi/2^{q})$. The optimization in \eqref{eq:beta_est} is performed using line search over the elements in $\mathcal{B}$ for a sufficiently large $K_{\mathcal{B}}$. The phase shift matrix used at the TX with CS-based beamforming is then $\mathbf{F}_{\mathrm{CS}}=\mathcal{Q}_{q}(\mathbf{F}^{\mathrm{opt}}(\beta_{\mathrm{est}}))$.
\par The partial 2D-DFT CS algorithm in FALP requires a lower computational complexity when compared to other standard CS techniques. Let $\mathbf{A}_{\mathrm{CS}} \in \mathbb{C}^{M\times N^2}$ be the CS matrix corresponding to the partial 2D-DFT CS problem in \eqref{eq:compactmaskedmeas}. CS algorithms that solve \eqref{eq:compactmaskedmeas} typically perform iterative optimization over an $N^2$ dimensional variable. For example, each iteration in the orthogonal matching pursuit (OMP) algorithm \cite{OMP_ref} requires computing matrix-vector products of the form $\mathbf{A}_{\mathrm{CS}} \mathbf{w}$ and  $\mathbf{A}^{\ast}_{\mathrm{CS}} \mathbf{d}$. As $\mathbf{A}_{\mathrm{CS}}$ is a partial 2D-DFT CS matrix for the model in \eqref{eq:compactmaskedmeas}, the matrix-vector products in CS can be implemented using the 2D-FFT \cite{partial2DDFTCS}. In the subsampling regime where $M=\alpha N^2$ for some constant $\alpha <1$, the 2D-FFT-based implementation has a complexity of $\mathcal{O}(N^2 \, \mathrm{log}N)$ while the complexity of standard matrix-vector product is  $\mathcal{O}(N^4)$ \cite{partial2DDFTCS}. 
\section{An MRI-inspired approach for ultra-low complexity beam alignment }\label{sec:MRI_BA}
In this section, we use insights from zero filling reconstruction in MRI \cite{bernstein2001effect}, to develop a different sub-Nyquist beam alignment technique based on FALP. The proposed technique does not require any iterative optimization, unlike standard CS or partial 2D-DFT CS. Specifically, the zero filling-based approach uses the perfect array-based training, and estimates a reasonable beamformer with just a single 2D-FFT computation. We prove that our method can achieve a beam alignment performance that is comparable to exhaustive scan with the 2D-DFT dictionary.
\subsection{Connection between CS in MRI and CS using FALP}
\par The measurements in MRI are defined by a trajectory that acquires samples from the Fourier transform of an MR image, also known as the k-space \cite{bernstein2001effect}. Prior work on CS-MRI has shown that MR images can be reconstructed using subsampling k-space trajectories that acquire fewer samples from the k-space \cite{sparseMRI}. For example, CS can reconstruct sparse angiogram images from fewer samples of their 2D-DFT \cite{csintro}. In this section, we explain the equivalent of k-space trajectory in FALP.
\par The channel measurements in FALP, i.e., $\mathcal{P}_{\Omega}(\mathbf{G})$, are samples from the 2D-DFT of the sparse masked beamspace $\mathbf{S}$. The subsampling pattern over $\mathbf{G}$ is determined by the set $\Omega$, as shown in Fig. \ref{fig:2dccs_illus}. The masked beamspace $\mathbf{S}$  is analogous to sparse angiogram image in MRI. The k-space, which represents the Fourier transform of the MR image, is equivalent to the matrix $\mathbf{G}$ as $\mathbf{G}=\mathbf{U}_N \mathbf{S}\mathbf{U}_N$. The analogue of a k-space trajectory in FALP is a 2D-curve in $\mathcal{I}_N \times \mathcal{I}_N$ that sequentially traverses through the coordinates in $\Omega$. An example of a trajectory for $M=9$ and $N=5$ is shown in Fig. \ref{fig:falptraj}. The trajectory in Fig. \ref{fig:falptraj} sequentially acquires the channel measurements $\{\mathbf{G}_{01}, \mathbf{G}_{22},\mathbf{G}_{04}, \cdots, \mathbf{G}_{32} \}$ for the subsampling set $\Omega=\{(0,1),(2,2),(0,4), \cdots , (3,2)\}$.
\begin{figure}[htbp]
\centering
\subfloat[Example of a trajectory]{\includegraphics[width=4.25cm, height=4.25cm]{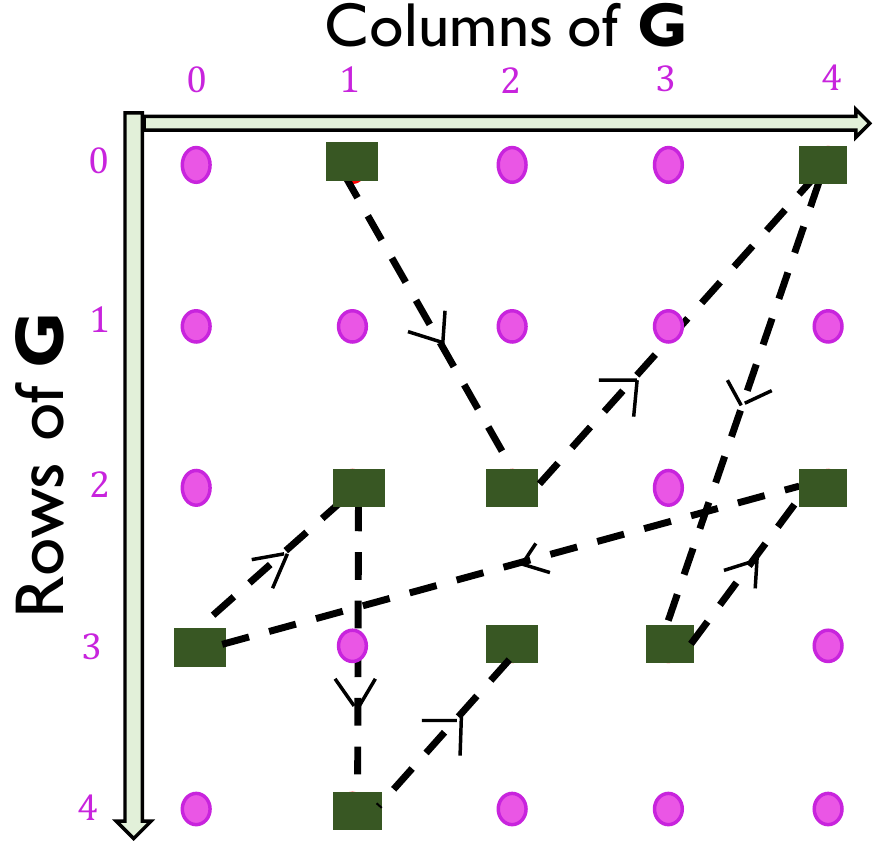}\label{fig:falptraj}}
\:\:\:
\subfloat[A realization of $|\mathbf{K}_{\mathrm{bl}}|$]{\includegraphics[trim=0cm 0cm 0cm 0cm,clip=true,width=4.25cm, height=4.25cm]{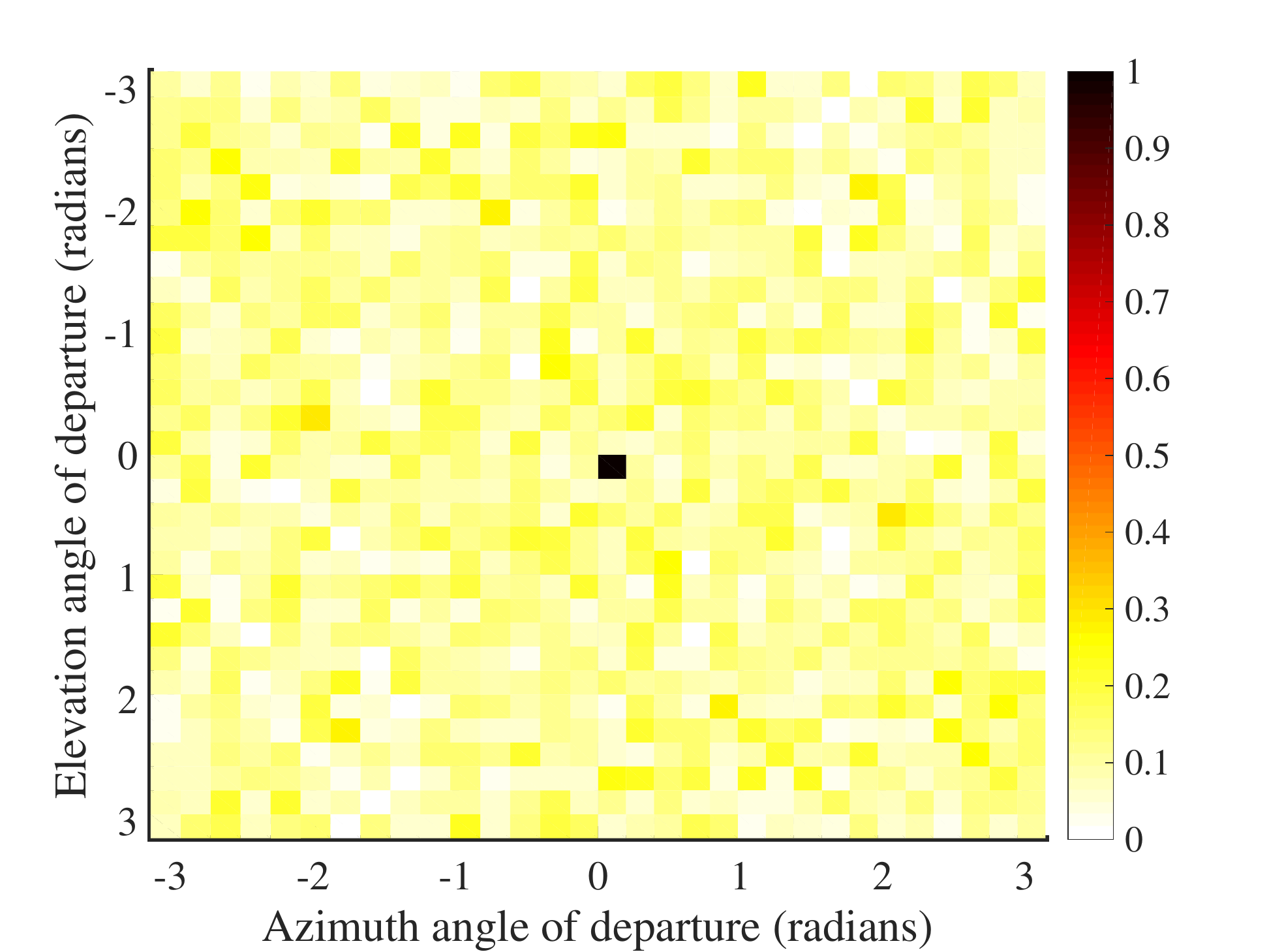}\label{fig:psf_rho_1_16}}
\caption{ \small The CS matrix in FALP is determined by a trajectory in the $\mathbf{G}$-space. The kernel matrix $\mathbf{K}_{\mathrm{bl}}$ is not a perfect point spread function due to subsampling. The matrix $\mathbf{K}_{\mathrm{bl}}$ shown in this example is for $N=32$ and $\rho = 1/16$.\normalsize}
\end{figure}
We would like to mention that k-space trajectories in MRI are typically constrained to be continuous, i.e., random k-space trajectories may not be realized. Random trajectories over the matrix $\mathbf{G}$, however, can be realized in FALP as any circulant shift of a base matrix can be applied to the phased array. In this paper, we use ideas from CS-MRI to investigate how beam alignment can be performed without any iterative optimization.
\subsection{Zero filling-based reconstruction: From MRI to beam alignment}
A traditional approach for MR imaging is to use a trajectory that fully samples the k-space. The MR image is then recovered by applying a 2D-Fourier transform over the acquired samples. Zero filling-based technique is an approach to estimate MR images using trajectories that subsample the k-space. The idea in zero filling-based reconstruction is to fill the unsampled entries in the k-space with zeros, and invert the Fourier transform to estimate the MR image. In this section, we show that zero filling-based recovery can also be used to recover a reasonable one-sparse approximation of the beamspace. To provide a better illustration and for a tractable analysis, we ignore the measurement noise in the system, i.e., $\sigma = 0$. The simulation results in Section \ref{sec:simulations} include the impact of measurement noise.
\par Now, we explain zero filling-based beamspace reconstruction using FALP. We define the subsampling ratio in FALP as $\rho=M/N^2$. The matrix $\mathbf{G}_{\Omega} \in \mathbb{C}^{N\times N}$ is defined to contain the entries of $\mathbf{G}$ at the locations in $\Omega$, and zeros in the other locations. Specifically, the entries of $\mathbf{G}_{\Omega}$ are given by
\begin{align}
\label{eq:ddualsetnz}
\mathbf{G}_{\Omega}(r[m],c[m])&= \mathbf{G}(r[m],c[m])\,\,\, \mathrm{for}\,\,1\leq m\leq M, \,\, \mathrm{and}\\
\label{eq:ddualsetz}
\mathbf{G}_{\Omega}(k,\ell)&=0 \,\,\,\,\,\forall(k,\ell)\notin\Omega.
\end{align}
The matrix $\mathbf{G}_{\Omega}$ can be constructed by populating the $M$ channel measurements in FALP at the locations in $\Omega$. In a subsampling setting, i.e., $M<N^2$, the construction of $\mathbf{G}_{\Omega}$ is equivalent to zero filling in MRI. The equivalence follows from the observation that $\mathbf{G}_{\Omega}$ is $0$ at the unsampled $\mathbf{G}$-space locations. In a full sampling setting, the beamspace $\mathbf{X}$ can be estimated from the transformations $\mathbf{S}=\mathbf{U}^{\ast}_N \mathbf{G} \mathbf{U}^{\ast}_N$ and $\mathbf{X}=\mathbf{S} \odot {\mathbf{Z}^{\text{c}}}$. The zero filling-based reconstruction procedure applies the same transformations over $\mathbf{G}_{\Omega}$, when $M<N^2$. We define the zero filling-based estimates corresponding to $\mathbf{S}$ and $\mathbf{X}$ as
\begin{align}
\label{eq:sbl_defn}
\mathbf{S}_{{\mathrm{bl}}}&=\mathbf{U}^{\ast}_N \mathbf{G}_{\Omega} \mathbf{U}^{\ast}_N \,\, \mathrm{and}\\
\label{eq:xbl_defn}
\mathbf{X}_{{\mathrm{bl}}}&=\mathbf{S}_{{\mathrm{bl}}} \odot {\mathbf{Z}^{\text{c}}}.
\end{align}
Note that $\mathbf{X}_{{\mathrm{bl}}}=\mathbf{X}$, when $M=N^2$. A natural question that arises is how does subsampling impact the zero filling-based estimate $\mathbf{X}_{{\mathrm{bl}}}$ when compared to $\mathbf{X}$.
\par Now, we show that subsampling the $\mathbf{G}$-space results in a blurred $\mathbf{X}_{{\mathrm{bl}}}$. For ease of notation, we investigate the impact of blur on $\mathbf{S}_{{\mathrm{bl}}}$; our study is justified by the fact that $\mathbf{X}_{{\mathrm{bl}}}$ in \eqref{eq:xbl_defn} is just a phase modulated version of $\mathbf{S}_{{\mathrm{bl}}}$ for a unimodular $\mathbf{Z}$. To characterize the impact of subsampling on $\mathbf{S}_{{\mathrm{bl}}}$, we define a binary matrix $\mathbf{N}_{\Omega} \in \mathbb{C}^{N \times N}$ such that 
\begin{equation}
\label{eq:binary_mat}
\mathbf{N}_{\Omega}(r,c)=\begin{cases}
\begin{array}{c}
1,\,\,\,\,\mathrm{if}\,\,(r,c)\in\Omega\\
0,\,\,\,\,\mathrm{if\,\,(r,c)\notin\Omega}
\end{array}\end{cases}.
\end{equation}
We define a kernel matrix $\mathbf{K}_{\mathrm{bl}}$ as the scaled inverse 2D-DFT of $\mathbf{N}_{\Omega}$, i.e., 
\begin{equation}
\label{eq:kbldefn}
\mathbf{K}_{\mathrm{bl}}=\frac{N}{M} \mathbf{U}^{\ast}_N \mathbf{N}_{\Omega} \mathbf{U}^{\ast}_N.
\end{equation}
It can be observed from \eqref{eq:binary_mat} that $\mathbf{G}_{\Omega}=\mathbf{G} \odot \mathbf{N}_{\Omega}$. As element-wise multiplication of two matrices results in 2D-circular convolution of their inverse 2D-DFTs \cite{imageprocess}, \eqref{eq:sbl_defn} can be simplified to
\begin{align}
\mathbf{S}_{{\mathrm{bl}}}&=(\mathbf{U}_{N}^{\ast}\mathbf{G}\mathbf{U}_{N}^{\ast})\circledast(\mathbf{U}_{N}^{\ast}\mathbf{N}_{\Omega}\mathbf{U}_{N}^{\ast})/N\\
\label{eq:convdistort}
&=\frac{M}{N^2}\mathbf{S} \circledast \mathbf{K}_{\mathrm{bl}}.
\end{align}
It can be observed from \eqref{eq:convdistort} that the matrices $\mathbf{S}_{{\mathrm{bl}}}$ and $\mathbf{S}$ differ by a convolutional distortion due to $\mathbf{K}_{\mathrm{bl}}$. The matrix $\mathbf{K}_{\mathrm{bl}}$ is similar to the point spread function (PSF) in MRI. For the special case of $M=N^2$, it can be observed that $\mathbf{K}_{\mathrm{bl}}=\mathbf{e}_0 \mathbf{e}^T_0$ and $\mathbf{S}_{{\mathrm{bl}}}= \mathbf{S}$. In the subsampling regime, however, the PSF $\mathbf{K}_{\mathrm{bl}}$ is not a perfect dirac matrix. Therefore, $\mathbf{K}_{\mathrm{bl}}$ induces distortion in $\mathbf{S}_{{\mathrm{bl}}}$ for $M<N^2$. For the masked beamspace in Fig. \ref{fig:orig_S}, an example of the distortion induced due to subsampling is shown in Fig. \ref{fig:distorted_S}. The amount of distortion in $\mathbf{S}_{{\mathrm{bl}}}$ is a function of the subsampling ratio, i.e., $\rho$. 
\begin{figure}[htbp]
\centering
\subfloat[A two-sparse $|\mathbf{S}|$]{\includegraphics[width=4.25cm, height=4.25cm]{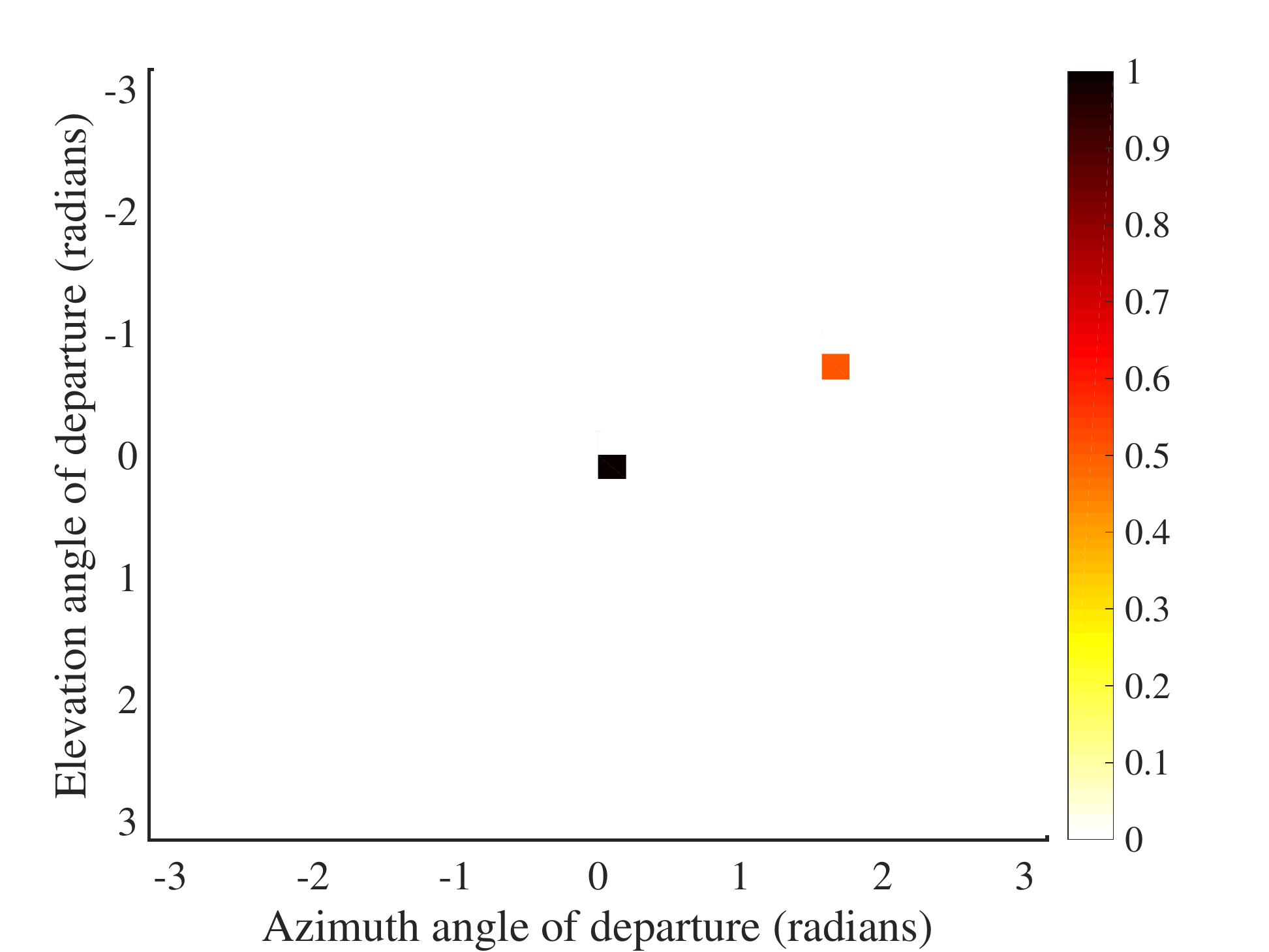}\label{fig:orig_S}}
\:\:\:
\subfloat[$|\mathbf{S}\circledast \mathbf{K}_{\mathrm{bl}}|$ for $\mathbf{K}_{\mathrm{bl}}$ in Fig. \ref{fig:psf_rho_1_16}]{\includegraphics[trim=0cm 0cm 0cm 0cm,clip=true,width=4.25cm, height=4.25cm]{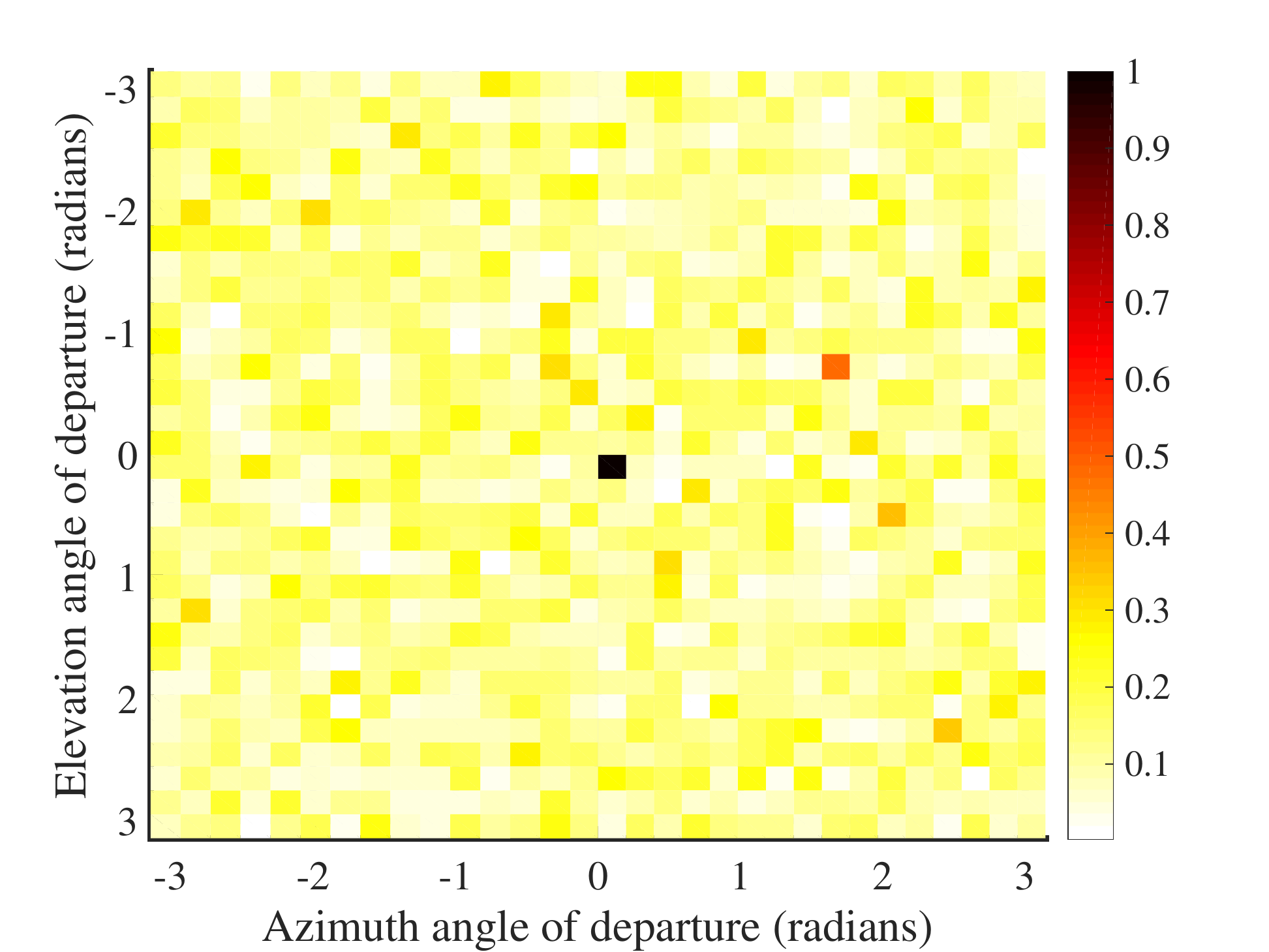}\label{fig:distorted_S}}
\caption{ \small The distortion in $\mathbf{S}\circledast \mathbf{K}_{\mathrm{bl}}$ depends on the subsampling set $\Omega$. From Fig. \ref{fig:orig_S} and Fig. \ref{fig:distorted_S}, it can be observed that coordinate corresponding to the best direction remains unchanged after distortion. Here, $\rho=1/16$ and $N=32$.\normalsize}
\end{figure}
\subsection{Beam alignment with the zero filling-based estimate} \label{sec:ZFB_procedure}
We define the zero filling-based beam alignment technique (ZFB) as one that chooses the beamformer based on the coordinate that maximizes $|\mathbf{S}_{{\mathrm{bl}}}|$, i.e., the zero filling-based masked beamspace estimate. If $|\mathbf{S}_{{\mathrm{bl}}}|$ achieves its maximum at $(r_{_{\mathrm{ZFB}}},c_{_{\mathrm{ZFB}}})$, the transmit beamformer in ZFB is defined as 
\begin{equation}
\mathbf{F}_{_{\mathrm{ZFB}}}=\mathcal{Q}_q \left(\mathbf{U}_N(:,r_{_{\mathrm{ZFB}}})\mathbf{U}_N(c_{_{\mathrm{ZFB}}},:) \right).
\end{equation}
It is important to note that exhaustive scan with the 2D-DFT dictionary selects the coordinate that maximizes $|\mathbf{X}|$, which is same as the one that maximizes $|\mathbf{S}|$, for a unimodular $\mathbf{Z}$. In an ideal setting, i.e., $M=N^2$, $\mathbf{S}_{{\mathrm{bl}}}=\mathbf{S}$ and ZFB results in the same beamformer as exhaustive scan with the 2D-DFT dictionary. In this section, we identify the subsampling regime for which ZFB results in the same beamformer as exhaustive scan.
\par Zero filling-based beam alignment is successful when the coordinate that maximizes $\mathbf{S}_{{\mathrm{bl}}}$ also maximizes $\mathbf{S}$. As $\mathbf{S}_{{\mathrm{bl}}}= \mathbf{S} \circledast \mathbf{K}_{{\mathrm{bl}}}$, it is important to characterize the entries of $\mathbf{K}_{{\mathrm{bl}}}$ to determine the success of ZFB. The matrix $\mathbf{K}_{{\mathrm{bl}}}$ in \eqref{eq:kbldefn} is a function of the subsampling set $\Omega$, that is chosen at random. It can be observed from \eqref{eq:binary_mat} that $\mathbf{N}_{\Omega}$ has $M$ ones and $N^2-M$ zeros. Therefore, $\mathbf{K}_{\mathrm{bl}}(0,0)$, the scaled DC-component of $\mathbf{N}_{\Omega}$, is $1$. The other entries of $\mathbf{K}_{\mathrm{bl}}$ explicitly depend on the elements in the sampled set $\Omega$ unlike $\mathbf{K}_{\mathrm{bl}}(0,0)$. As $\Omega$ is sampled at random, $\mathbf{K}_{\mathrm{bl}}(r,c)$ can be modelled as a random variable for any $(r,c)\neq (0,0)$ with a variance \cite{sparseMRI}
\begin{equation}
\label{eq:var_defn}
\xi^{2}=\frac{1-\rho}{\rho N^{2}}.
\end{equation} 
Note that the variance of the PSF at the $N^2-1$ locations other than $(0,0)$ is exactly the same when $\Omega$ is chosen uniformly at random. The magnitude of $|\mathbf{K}_{\mathrm{bl}}|$, is shown in Fig. \ref{fig:psf_rho_1_16} for a particular realization of $\Omega$. 
\par We explain the setting used to investigate beam alignment with the zero-filling based approach. For simplicity of analysis, we consider a $2$-path channel such that the beamspace angles of departure of each path are aligned with those defined by the 2D-DFT dictionary. Without loss of generality, we consider $\mathbf{S}(0,0)=1$ and $\mathbf{S}(r_o,c_o)=a$, such that $|a|<1$ and $(r_o,c_o)$ is some coordinate other than $(0,0)$. The remaining $N^2-2$ entries of $\mathbf{S}$ are equal to $0$ as seen in Fig. \ref{fig:orig_S}. For such a setting, beam alignment via ZFB is successful when $|(\mathbf{S} \circledast \mathbf{K}_{\mathrm{bl}})_{0,0}|$ is the largest entry in $|\mathbf{S} \circledast \mathbf{K}_{\mathrm{bl}}|$. The probability that ZFB  is successful can be expressed as 
\begin{equation}
\label{eq:pexact}
p=\mathrm{Pr}(\underset{(r,c)\neq(0,0)}{\cap}|(\mathbf{S}\circledast\mathbf{K}_{\mathrm{bl}})_{0,0}|>|(\mathbf{S}\circledast\mathbf{K}_{\mathrm{bl}})_{r,c}|).
\end{equation}  
The statistics of the PSF, i.e., $\mathbf{K}_{\mathrm{bl}}$, can be used to determine the entries in $\mathbf{S} \circledast \mathbf{K}_{\mathrm{bl}}$. Prior work in MRI \cite{sparseMRI} and partial 2D-DFT CS \cite{2D_CS_gaussian_noise} has modelled $\mathbf{K}_{\mathrm{bl}}(r,c)$ as $\mathcal{N}_{\mathrm{c}}(0, \xi^2)$ for any $(r,c)\neq (0,0)$. It is important to note that the random variables $\mathbf{K}_{\mathrm{bl}}(r_1,c_1)$ and $\mathbf{K}_{\mathrm{bl}}(r_2,c_2)$ can be coupled for $(r_1,c_1)\neq (0,0)$ and $(r_2,c_2)\neq (0,0)$. For instance, as $\mathbf{N}_{\Omega}$ is a real matrix, its inverse 2D-DFT must be conjugate symmetric \cite{imageprocess}, i.e., $\mathbf{K}_{\mathrm{bl}}(N-r,N-c)={\mathbf{K}^{\text{c}}}_{\mathrm{bl}}(r,c)$ for any $(r,c) \in \mathcal{I}_{N} \times \mathcal{I}_{N}$. In this paper, we account for such dependencies to derive a lower bound on the probability of success with ZFB.
\par Now, we describe the statistics of the entries in $\mathbf{S} \circledast \mathbf{K}_{\mathrm{bl}}$. For distinct coordinates $(r_o,c_o)$ and $(r_1,c_1)$, we model $\mathbf{K}_{\mathrm{bl}}(r_o,c_o)$, $\mathbf{K}_{\mathrm{bl}}(r_1,c_1)$ and $\mathbf{K}_{\mathrm{bl}}(r_1-r_o,c_1-c_o)$ as IID random variables $\mathsf{x}$, $\mathsf{b}$ and $\mathsf{w}$. To validate the independence assumption, we conducted empirical studies on the total variation distance between the joint distribution of $\mathsf{x}$, $\mathsf{b}$ and $\mathsf{w}$, and the product of their marginals. Our studies indicate that the variables can be considered ``independent'' except for the case when $(r_1,c_1)=(2r_o,2c_o)$ or $((r_o+c_o)/2, (r_o+c_o)/2)$ or $(-r_o,-c_o)$. We ignore these three scenarios to assume that $\mathsf{x}$, $\mathsf{b}$ and $\mathsf{w}$ are independent; such an assumption was also made in \cite{2D_CS_gaussian_noise}. Each of the random variables $\mathsf{x}$, $\mathsf{b}$ and $\mathsf{w}$ is distributed as $\mathcal{N}_{\mathrm{c}}(0,\xi^2)$. From the conjugate symmetry property, it follows that $\mathbf{K}_{\mathrm{bl}}(N-r_o,N-c_o)=\mathsf{x}^{\ast}$. The $(k,\ell)^{\mathrm{th}}$ entry in $\mathbf{S} \circledast \mathbf{K}_{\mathrm{bl}}$ can be expressed as 
\begin{equation}
\label{eq:conv_compact}
(\mathbf{S} \circledast \mathbf{K}_{\mathrm{bl}})_{k,\ell}=\sum^{N-1}_{r=0}\ \sum^{N-1}_{c=0}\mathbf{S}(r,c)\mathbf{K}_{\mathrm{bl}}\left((k-r)_N,(\ell-c)_N\right).
\end{equation}
For a $2$-sparse $\mathbf{S}$ defined in this section, it can be shown from \eqref{eq:conv_compact} that $(\mathbf{S} \circledast \mathbf{K}_{\mathrm{bl}})_{0,0}$, $(\mathbf{S} \circledast \mathbf{K}_{\mathrm{bl}})_{r_o,c_o}$ and $(\mathbf{S} \circledast \mathbf{K}_{\mathrm{bl}})_{r_1,c_1}$ are $1+a\mathsf{x}^{\ast}$, $a+\mathsf{x}$ and $\mathsf{b}+a\mathsf{w}$.
\par We now derive a lower bound on the probability of successful beam alignment for the $2-$ sparse channel. We define $\mathcal{Q}_1(\alpha, \beta)$ as the first order Marcum-Q function with parameters $\alpha$ and $\beta$ \cite{chi_sq}. A lower bound on the beam alignment probability in \eqref{eq:pexact} is derived in Theorem \ref{theorem_prob_bound}.
\begin{theorem} 
For a $2-$ sparse beamspace channel, the probability that zero filling-based beam alignment is successful can be lower bounded as  
\begin{align}
\nonumber
p&\geq1-\mathcal{Q}_{1}(0,N \sqrt{{2 \rho}/({1-\rho}) })\\
\label{eq:probab_result}
&-\frac{(1+|a|^{2})(N^{2}-2)}{1+2|a|^{2}}\mathrm{exp}\left(\frac{-N^2 \rho}{(1+2|a|^{2})(1-\rho)}\right).
\end{align}
\label{theorem_prob_bound}
\end{theorem}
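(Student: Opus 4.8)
The plan is to bound the failure probability $1-p$ by a union bound, after isolating the comparison against the second active coordinate $(r_o,c_o)$ from the remaining $N^2-2$ comparisons. Writing $F_{r,c}$ for the failure event $\{\,|(\mathbf{S}\circledast\mathbf{K}_{\mathrm{bl}})_{0,0}|\le|(\mathbf{S}\circledast\mathbf{K}_{\mathrm{bl}})_{r,c}|\,\}$, one has $1-p=\Pr\bigl(\bigcup_{(r,c)\neq(0,0)}F_{r,c}\bigr)\le\Pr(F_{r_o,c_o})+\sum_{(r_1,c_1)}\Pr(F_{r_1,c_1})$, where the sum runs over the $N^2-2$ coordinates distinct from $(0,0)$ and $(r_o,c_o)$. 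With the representations $(\mathbf{S}\circledast\mathbf{K}_{\mathrm{bl}})_{0,0}=1+a\mathsf{x}^{\ast}$, $(\mathbf{S}\circledast\mathbf{K}_{\mathrm{bl}})_{r_o,c_o}=a+\mathsf{x}$, and $(\mathbf{S}\circledast\mathbf{K}_{\mathrm{bl}})_{r_1,c_1}=\mathsf{b}+a\mathsf{w}$, with $\mathsf{x},\mathsf{b},\mathsf{w}$ modelled as IID $\mathcal{N}_{\mathrm{c}}(0,\xi^{2})$, the task reduces to evaluating these two probabilities.

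For the first, I would expand $|1+a\mathsf{x}^{\ast}|^{2}=1+2\mathrm{Re}(a\mathsf{x}^{\ast})+|a|^{2}|\mathsf{x}|^{2}$ and $|a+\mathsf{x}|^{2}=|a|^{2}+2\mathrm{Re}(a\mathsf{x}^{\ast})+|\mathsf{x}|^{2}$; the identical linear terms cancel on subtraction, and $F_{r_o,c_o}$ collapses to $(1-|a|^{2})(1-|\mathsf{x}|^{2})\le0$, i.e.\ to the condition $|\mathsf{x}|\ge1$ since $|a|<1$. As $|\mathsf{x}|^{2}/\xi^{2}$ is a unit-mean exponential, $\Pr(F_{r_o,c_o})=e^{-1/\xi^{2}}$; substituting $\xi^{2}=(1-\rho)/(\rho N^{2})$ and using $\mathcal{Q}_{1}(0,\beta)=e^{-\beta^{2}/2}$ rewrites this as $\mathcal{Q}_{1}\bigl(0,N\sqrt{2\rho/(1-\rho)}\bigr)$, the first term of \eqref{eq:probab_result}.

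For a generic off coordinate, under the independence modelling $\mathsf{b}+a\mathsf{w}$ is $\mathcal{N}_{\mathrm{c}}(0,\xi^{2}(1+|a|^{2}))$-distributed and independent of $\mathsf{x}$, so conditioning on $\mathsf{x}$ gives $\Pr(F_{r_1,c_1}\mid\mathsf{x})=\exp\bigl(-|1+a\mathsf{x}^{\ast}|^{2}/(\xi^{2}(1+|a|^{2}))\bigr)$ and hence $\Pr(F_{r_1,c_1})=\frac{1}{\pi\xi^{2}}\int_{\mathbb{C}}\exp\bigl(-|\mathsf{x}|^{2}/\xi^{2}-|1+a\mathsf{x}^{\ast}|^{2}/(\xi^{2}(1+|a|^{2}))\bigr)\,d\mathsf{x}$. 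Expanding $|1+a\mathsf{x}^{\ast}|^{2}$ again, the coefficient of $|\mathsf{x}|^{2}$ in the exponent becomes $(1+2|a|^{2})/(\xi^{2}(1+|a|^{2}))$, and the integral is evaluated by the complex Gaussian identity $\int_{\mathbb{C}}e^{-c|z|^{2}-2\mathrm{Re}(\mu z^{\ast})}\,dz=(\pi/c)\,e^{|\mu|^{2}/c}$. Collecting the constant, linear and quadratic contributions to the exponent, the prefactor reduces exactly to $(1+|a|^{2})/(1+2|a|^{2})$ and the exponent to $-1/(\xi^{2}(1+2|a|^{2}))$, giving $\Pr(F_{r_1,c_1})=\frac{1+|a|^{2}}{1+2|a|^{2}}\exp\bigl(-1/(\xi^{2}(1+2|a|^{2}))\bigr)$. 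Summing over the $N^2-2$ off coordinates and substituting $\xi^{2}=(1-\rho)/(\rho N^{2})$ produces the second term, and adding the two bounds yields \eqref{eq:probab_result}.

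I expect the bookkeeping in this last complex Gaussian integral to be the main obstacle: completing the square in the complex variable $\mathsf{x}$ and checking that the three exponent pieces combine exactly to the stated closed form is delicate. A secondary, conceptual point is the independence modelling of $\mathsf{x},\mathsf{b},\mathsf{w}$, which is only approximate (the conjugate-symmetry relation $\mathbf{K}_{\mathrm{bl}}(N-r_o,N-c_o)=\mathsf{x}^{\ast}$ already couples the $(0,0)$ entry to $\mathsf{x}$), so the statement is to be read under that assumption; note, however, that the union bound over the off coordinates itself needs no independence among them.
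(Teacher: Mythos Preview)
Your proposal is correct and follows essentially the same structure as the paper's proof: the same union bound decomposition $1-p\le p_1+(N^2-2)p_2$, the same cancellation of the $2\mathrm{Re}(a\mathsf{x}^{\ast})$ cross terms reducing $p_1$ to $\Pr(|\mathsf{x}|\ge1)$, and the same final closed form for $p_2$. The only technical difference is in evaluating $p_2$: the paper recasts $|1+a\mathsf{x}^{\ast}|^{2}$ and $|\mathsf{b}+a\mathsf{w}|^{2}$ as scaled non-central and central $\chi^{2}$ variables and recognizes the resulting integral as the moment generating function of the non-central $\chi^{2}$, whereas you integrate the complex Gaussian density directly by completing the square---both routes are standard and yield the identical expression.
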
 
\begin{proof}
See Section \ref{sec:proof_prob_bound}.
\end{proof}
\par We show that the phase transition region that follows from our bound in \eqref{eq:probab_result} matches well with that observed from simulations. We consider a setting with $N=32$, and a $2-$ sparse $\mathbf{S}$ with $\mathbf{S}(0,0)=1$ and $\mathbf{S}(r_o,c_o)=a$ for some $|a|<1$. In our simulations, $(r_o,c_o)$ was chosen uniformly at random, and $M$ entries of $\mathbf{G}$ were sampled at random by applying $M$ random circulant shifts of a $32\times 32$ perfect binary array at the TX. Beam alignment is declared successful if the zero filling-based estimate, i.e., $|\mathbf{S} \circledast \mathbf{K}_{\mathrm{bl}}|$, achieves its maximum at $(0,0)$. The phase transition regions corresponding to the bound in \eqref{eq:probab_result} and the observed beam alignment probability are shown in Fig. \ref{fig:ps_anal} and Fig. \ref{fig:ps_sim}. The plots indicate that ZFB performs well with sub-Nyquist channel measurements. The proposed zero filling-based technique estimates a good one-sparse approximation of $\mathbf{S}$ that is consistent with the channel measurements. CS algorithms can estimate better sparse approximations of $\mathbf{S}$, but require iterative optimization. ZFB can provide a reasonable beamformer with a single 2D-FFT, and is advantageous over CS-based beam alignment in terms of computational complexity.
\begin{figure}[htbp]
\centering
\subfloat[Phase transition region using \eqref{eq:probab_result}.]{\includegraphics[width=4.25cm, height=4.25cm]{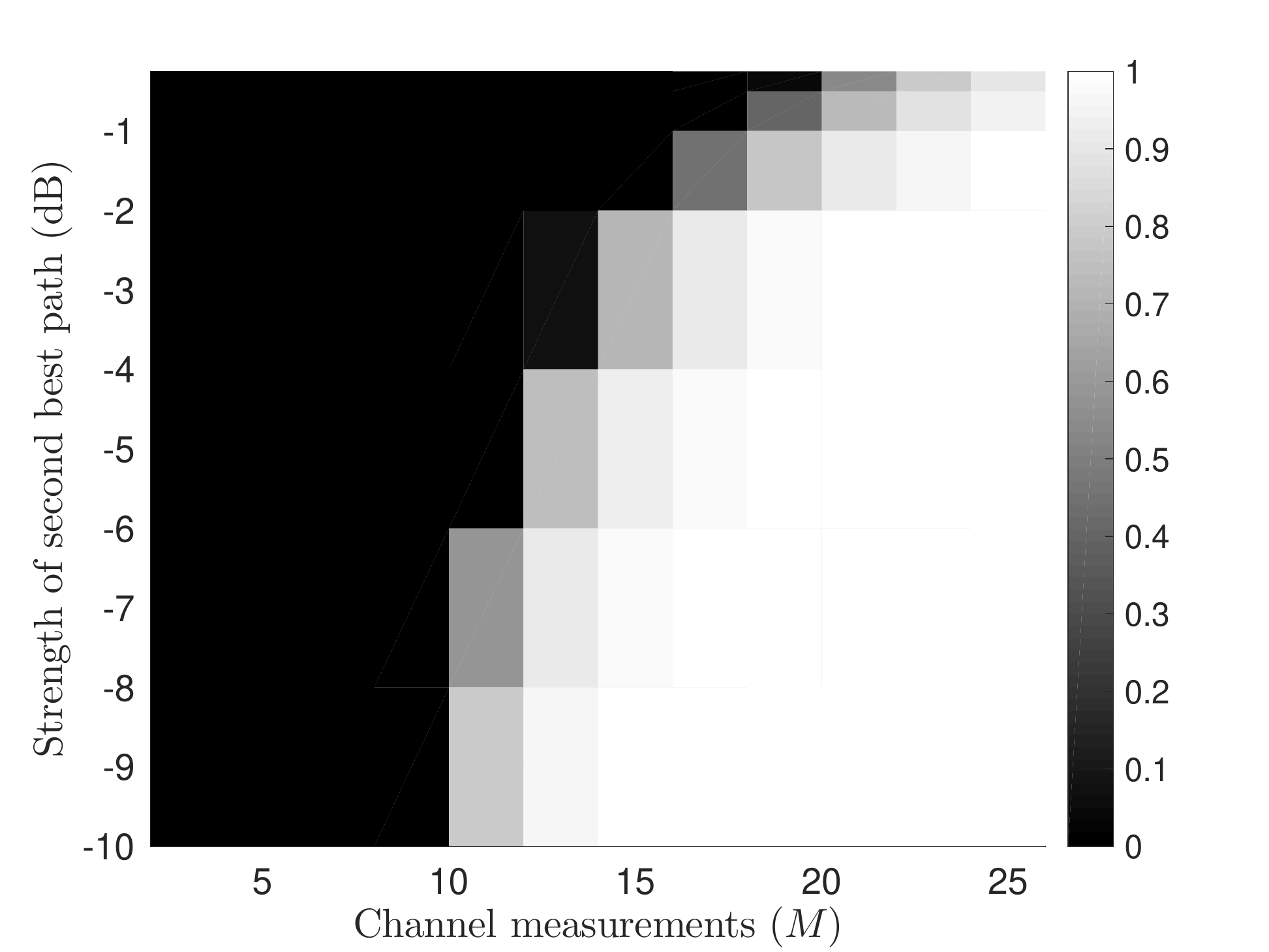}\label{fig:ps_anal}}
\:\:\:
\subfloat[Observed phase transition region]{\includegraphics[trim=0cm 0cm 0cm 0cm,clip=true,width=4.25cm, height=4.25cm]{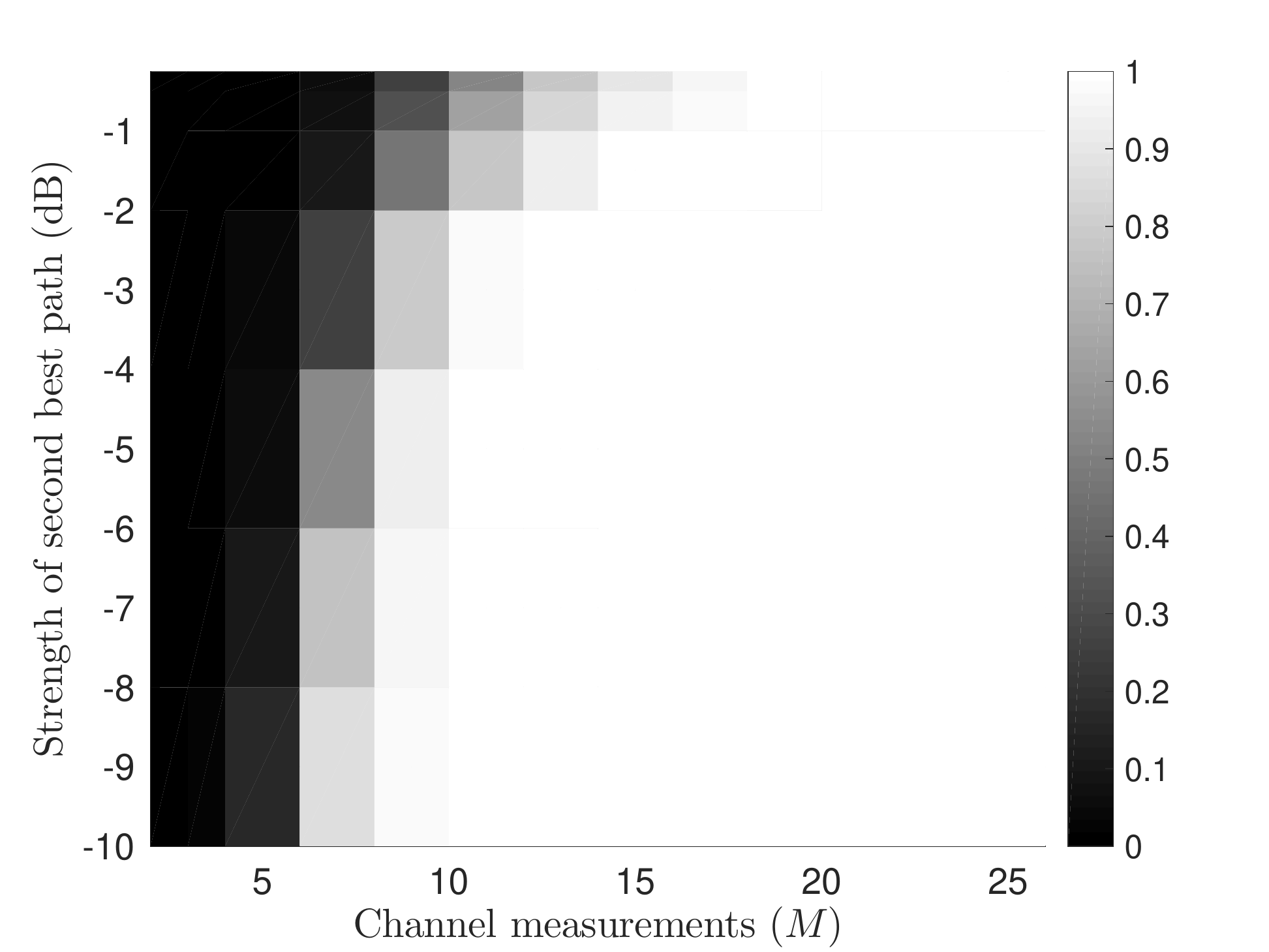}\label{fig:ps_sim}}
\caption{ \small For a $2$-sparse channel, the plots show the beam alignment probability as a function of the strength of second best path and the number of channel measurements. Here, $N=32$, $\rho=M/1024$ and $\sigma=0$. The strength of the second best path is  $20\,\mathrm{log}_{10}(|a|)$. In this example, ZFB succeeds with a high probability for $\rho \geq 0.03$. \normalsize}
\end{figure}
\section{Simulations} \label{sec:simulations}
In this section, we explain how beam alignment can be performed in a wideband system. Then, we describe the system and channel parameters used in our simulations. Finally, we present numerical results that show the performance of CS-based beam alignment and zero filling-based beam alignment using FALP.
\subsection{Beam alignment in a wideband system} \label{sec:wbext}
\par  We use a Golay sequence-based frame structure \cite{11ad} and extend the beam alignment techniques in Sec.~\ref{sec:FALP_mainsubsection} and Sec.~\ref{sec:ZFB_procedure} to a wideband system. For an elaborate description of the wideband extension, we refer the reader to \cite{ZCglobecom}. We consider an $L$ tap wideband channel $\left\{ \mathbf{H}[\ell]\right\} _{\ell=0}^{L-1}$, where $\mathbf{H}[\ell] \in \mathbb{C}^{N \times N}$. For each phase shift configuration in $\{\mathbf{P}[m]\}^{M}_{n=1}$, the TX transmits a Golay complementary sequence of length $2N_{\mathrm{s}}$ followed by a guard interval of $L-1$ zeros. The use of guard interval prevents inter-frame interference and allows sufficient time to configure the phase shifters \cite{javiCS}. The RX uses the perfect autocorrelation property of complementary Golay sequences to obtain the channel impulse response (CIR) for each phase shift configuration. The CIR obtained in the $m^{\mathrm{th}}$ training slot is a noisy version of $\{ \langle \mathbf{H}[\ell], \mathbf{P}[m] \rangle \}_{\ell=0}^{L-1}$. Using several spatial channel projections, it is possible to reconstruct the wideband channel. We define $\mathbf{Y}_{\mathrm{blk}} \in \mathbb{C}^{M \times L}$ as a matrix that contains noisy wideband channel projections. The noise in $\mathbf{Y}_{\mathrm{blk}}$ is modelled using $ \mathbf{V}_s\in\mathbb{C}^{M \times L}$; $\mathbf{Y}_{\mathrm{blk}}(m,\ell)$ is then
\begin{equation}
\mathbf{Y}_{\mathrm{blk}}(m,\ell)= \langle \mathbf{H}[\ell], \mathbf{P}[m] \rangle + \mathbf{V}_s(m,\ell).
\end{equation}
As a spreading gain of $2 N_s$ is achieved at the output of the Golay correlator, it can be observed that the entries in $\mathbf{V}_s$ are independent and identically distributed as $\mathcal{N}_{\mathrm{c}}(0,\sigma^2/(2N_s))$. 
\par In this paper, a single tap of the wideband channel is used to determine the transmit beamformer. Nevertheless, CS-based wideband channel estimation can also be performed at the expense of a higher complexity \cite{cschest,javiCS}. The tap used to perform beam alignment is given by $\ell_{o}={\mathrm{argmax}}_{\ell}\Vert\mathbf{Y}_{\mathrm{blk}}(:,\ell)\Vert_2$. The channel measurements considered in FALP are compressive spatial projections of $\mathbf{H}[\ell_o]$, i.e., $\mathbf{y}=\mathbf{Y}_{\mathrm{blk}}(:,\ell_o)$. We use $\mathbf{H}[\ell_{\mathrm{opt}}]$ to denote the channel tap that has the maximum energy of the $L$ taps. In practice, $\mathbf{H}[\ell_{o}]$ can be different from $\mathbf{H}[\ell_{\mathrm{opt}}]$ as $\ell_{o}$ is determined from lower-dimensional spatial projections of the $L$ channel taps. The matrix $\hat{\mathbf{H}}[\ell_{o}]$ obtained using CS over $\mathbf{y}$, can be considered as an equivalent narrowband channel estimate that is used for beam alignment. Note that our approach ignores the correlation between channel taps as it performs beam alignment based on a single tap. Developing better beam alignment strategies that account for such correlations is an interesting direction.       
\subsection{System and channel description}
We consider an analog beamforming system in Fig.~\ref{fig:architect}, where the TX is equipped with a half-wavelength spaced UPA of size $32 \times 32$, i.e., $N=32$. The resolution of phase shifters is set to $q=1$ bit. We consider a carrier frequency of $28\, \mathrm{GHz}$ and an operating bandwidth of $100\, \mathrm{MHz}$, which corresponds to a symbol duration of $10\, \mathrm{ns}$. The height of the TX and the RX were $5\, \mathrm{m}$ and $2\, \mathrm{m}$ in our simulation setup. The separation between the TX and the RX is set to $60\, \mathrm{m}$. The transmit power at the TX is assumed to be $20\, \mathrm{dBm}$. The RX is equipped with a single antenna element.
\par The mmWave channels in our simulations were derived from the QuaDRiga channel simulator for the 3GPP 38.901 UMi-NLoS scenario \cite{Quadriga}. For a TX-RX separation of $60\,\mathrm{m}$, the omnidirectional RMS delay spread was found to be less than $176 \, \mathrm{ns}$ in more than $90 \%$ of the channel realizations. Considering the leakage effects due to pulse shaping, the wideband channel is modelled using $L=64$ taps corresponding to a duration of $640 \, \mathrm{ns}$. The channel measurements for CS-based beam alignment are acquired using Golay complementary sequences along the time dimension, where each sequence is of length $N_{\mathrm{s}}=64$. For our simulation settings, it can be observed that the duration of the guard interval that follows a Golay pair is $630\, \mathrm{ns}$. The guard interval is sufficient enough to change the phase shift configuration at the TX, as phase shifters with a settling time of about $30\, \mathrm{ns}$ at $28 \, \mathrm{GHz}$ have been reported in \cite{phaseshift}. The standard 2D-DFT is used as a sparsifying dictionary for the spatial channel, unless otherwise stated. The simulation results we report are the averages over $100$ channel realizations.
\subsection{Performance evaluation}
The 2D-CCS-based technique in Sec. \ref{sec:FALP_mainsubsection} estimates the beamspace $\mathbf{X}$, while ZFB in Sec. \ref{sec:ZFB_procedure} estimates a one-sparse approximation of $\mathbf{X}$. Therefore, we define different metrics and benchmarks to evaluate the two techniques. 
\begin{figure}[h!]
\centering
\includegraphics[trim=1.5cm 6.75cm 2.5cm 7.4cm,clip=true,width=0.47 \textwidth]{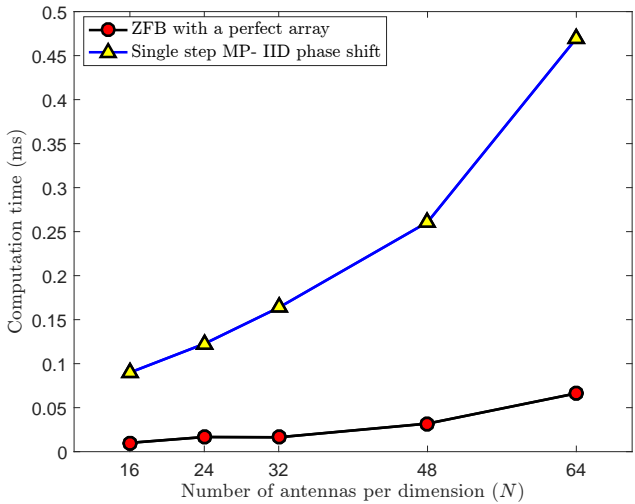}
  \caption{ZFB with FALP's training has a lower computational complexity than single step matching pursuit (MP) with random IID phase shift-based training \cite{anum}. Here, the number of measurements was set to $M=\mathrm{round}(20\, \mathrm{log}_{10}N^2)$. A similar complexity reduction can be observed with CS, due to the use of structured training in FALP.}
  \label{fig:Complexity_vs_ant}
\end{figure}
\par We describe the metrics and benchmarks used to evaluate 2D-CCS with FALP. In this paper, the error in the channel estimate is defined for a single tap, as CS is performed over the channel measurements corresponding to one tap. We use the OMP algorithm for CS-based channel estimation \cite{OMP_ref}. The stopping threshold and the maximum number of iterations used in OMP were $\sigma \sqrt{M/2N_{\mathrm{s}}}$ and $50$ \cite{OMP_ref}. The normalized squared error in the channel estimate is defined as 
\begin{equation}
\label{eq:NSE_defn}
\mathrm{NSE}= 20\, \mathrm{log}_{10}\left(\frac{\Vert \mathbf{H}[\ell_o]-\hat{\mathbf{H}}[\ell_o]\Vert_\mathrm{F}}{\Vert \mathbf{H}[\ell_o] \Vert_\mathrm{F}}\right).
\end{equation}
Using the channel estimate $\hat{\mathbf{H}}[\ell_o]$, the TX constructs the beamformer $\mathbf{F}_{\mathrm{CS}}$ according to the method in Sec.~\ref{sec:FALP_mainsubsection}. The number of elements in $\mathcal{B}$ was chosen as $K_{\mathcal{B}}=6$. The effective wideband single-input single-output (SISO) channel after CS-based beam alignment is then $ \{ \langle \mathbf{H}[\ell], \mathbf{F}_{\mathrm{CS}} \rangle \}^{L-1}_{\ell=0}$. The achievable rate is computed for the effective channel using the water filling algorithm. 
\par We compare the proposed CS-based approach with the perfect channel state information (CSI) scenario in which the beamformer is computed using $\mathbf{H}[\ell_{\mathrm{opt}}]$. We also evaluate random 2D-CCS, and standard CS with random IID phase shift matrices\cite{javiCS}. In random 2D-CCS, the base matrix $\mathbf{P}$ is chosen at random from the feasible set, i.e., $\mathbb{Q}^{N \times N}_q$ \cite{kramer}. It is important to note the transformation $\mathbf{S}=\mathbf{X}\odot \mathbf{Z}$ may not be invertible for a random base matrix. As a result, the approach in Fig.~\ref{fig:summary_FALP}, that solves for the beamspace through a partial 2D-DFT problem cannot be used for random 2D-CCS. The complexity of random 2D-CCS, however, is lower than standard CS. We use a low complexity version of OMP that exploits the 2D-convolutional structure of the training in random 2D-CCS.
\par Now, we define benchmarks for ZFB using FALP. As a one-sparse approximation of $\mathbf{X}$ is estimated in ZFB, its CS counterparts are those that estimate a single beamspace component. We consider the single step matching pursuit (MP) algorithm in \cite{anum} for two different training designs. The first design uses fewer 2D-circulant shifts of a random base matrix, while the second one is the common IID phase shift-based design \cite{javiCS}. We would like to highlight the fact that both ZFB and matching pursuit with 2D-circulant shifts of a random base matrix, exploit the 2D-FFT for fast estimation. The complexity of both these algorithms is lower than the one that uses the IID phase shift-based design, as shown in Fig.~\ref{fig:Complexity_vs_ant}. 
\begin{figure}[h!]
\centering
\includegraphics[trim=2cm 6.75cm 2.25cm 7.5cm,clip=true,width=0.47 \textwidth]{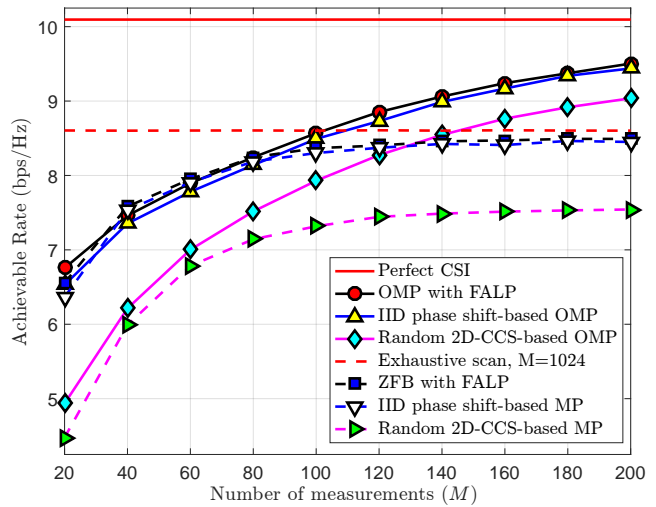}
  \caption{ Achievable rate with the number of CS measurements for a TX-RX separation of $60\, \mathrm{m}$. CS using FALP's training achieves better beam alignment than the random 2D-CCS-based training and the IID phase shift-based training. Similarly, ZFB with FALP's training results in a higher rate than MP with standard CS designs \cite{anum}. }
  \label{fig:Rate_vs_Meas}
\end{figure}
\par We compare the achievable rate obtained using the proposed techniques and the benchmarks. It can be observed from Fig.~\ref{fig:Rate_vs_Meas} that CS using the perfect array-based training in FALP achieves about $90 \%$ of the perfect CSI rate, with just $120$ channel measurements. In contrast, exhaustive scan-based beam alignment with the 2D-DFT dictionary requires $1024$ channel measurements. While both FALP and random 2D-CCS use 2D-convolutional channel acquisition, it can be observed that the rate achieved with FALP is significantly larger than that with random 2D-CCS. Similarly, ZFB performs better than single step MP that uses 2D-circulant shifts of a random base matrix. The difference in performance between the two techniques is due to the choice of the base matrix, i.e., $\mathbf{P}$. FALP uses a carefully designed base matrix, i.e., a perfect array, that satisfies the optimality conditions for efficient 2D-CCS. The loss in achievable rate when compared to the perfect CSI case is due to noise in the channel measurements and leakage effects in the beamspace representation. 
\begin{figure}[h!]
\centering
\includegraphics[trim=9.5cm 2cm 9.8cm 2.5cm,clip=true,width=0.47\textwidth]{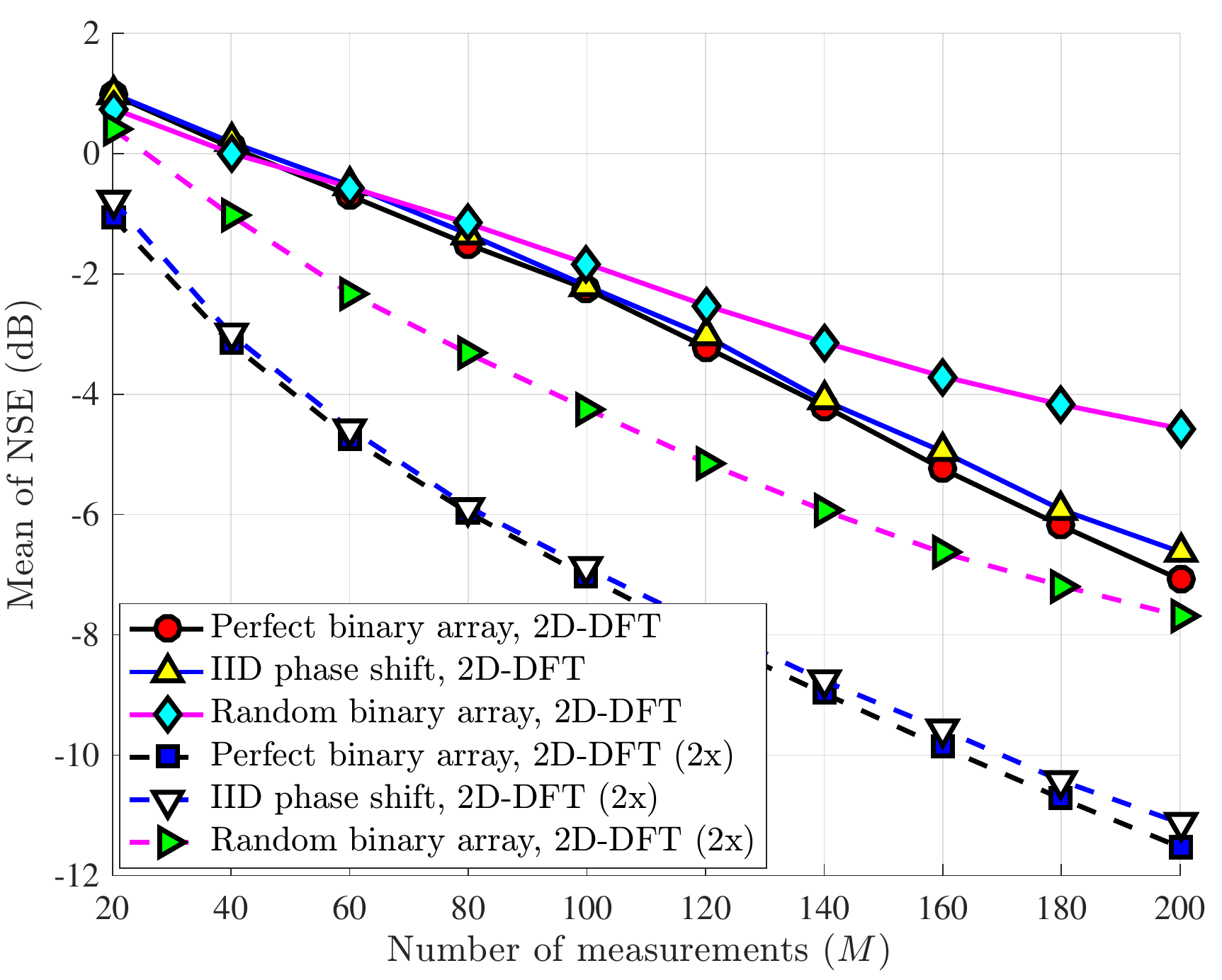}
  \caption{The plot shows the mean of the NSE in the channel estimate with the number of channel measurements. A $2\times$ resolution AoD domain representation of the beamspace results in a lower NSE. The complexity of CS algorithms that use such a representation, however, is higher than those that use the standard 2D-DFT dictionary. }
  \label{fig:MNSE_vs_SNR}
\end{figure}
\par The plot in Fig.~\ref{fig:Rate_vs_Meas} indicates that partial 2D-DFT CS-based beam alignment with FALP performs slightly better than standard CS with the IID random phase shift-based design. It is important to note, however, that the computational complexity of the CS algorithm in FALP is significantly lower than the standard approach that uses IID random phase shifts. As CS algorithms typically involve iterative optimization, it is useful to reduce the complexity of computations in each iteration without compromising the performance of the algorithm. The random 2D-CCS-based approach has a lower complexity, but results in a poor achievable rate. CS-based algorithms in  FALP achieve the best of both worlds, i.e., better beam alignment at a reduced computational complexity.
\par In Fig.~\ref{fig:MNSE_vs_SNR}, we plot the channel estimation accuracy with CS-based techniques. The metric used in Fig.~\ref{fig:MNSE_vs_SNR} is the mean of the normalized squared error in \eqref{eq:NSE_defn}, i.e., $\mathbb{E}[\mathrm{NSE}]$. We would like to point out that this metric is different from the usual normalized mean squared error (NMSE).  We use mean of NSE, as the normalized mean squared error (NMSE) is dominated by poor channel realizations that result in a low received power. As the primary objective of our simulations is to compare different training solutions, we propose to use $\mathbb{E}[\mathrm{NSE}]$, where the mean is taken over the NSE in dB. Such a metric is robust to fluctuations in the norm of the channel. The proposed CS-based beam alignment technique results in a slightly lower mean NSE than the random phase shift-based approach. Although the mean NSE approaches just $-7\, \mathrm{dB}$, it can be observed from Fig.~\ref{fig:Rate_vs_Meas} that the achievable rate with the proposed approach is reasonable for $M=200$. It is because beam alignment depends on how well the CS algorithm reconstructs the phase of the channel instead of the full channel.  From Fig.~\ref{fig:MNSE_vs_SNR}, it can be observed that the use of an oversampled 2D-DFT dictionary, by a factor of $2$ along both the azimuth and elevation dimensions, results in a lower mean NSE. In such a case, CS algorithms have a higher complexity as the dimensionality of the underlying optimization problem is quadrupled.
\begin{figure}[h!]
\centering
\includegraphics[trim=9.75cm 2cm 9.9cm 2.5cm,clip=true,width=0.47 \textwidth]{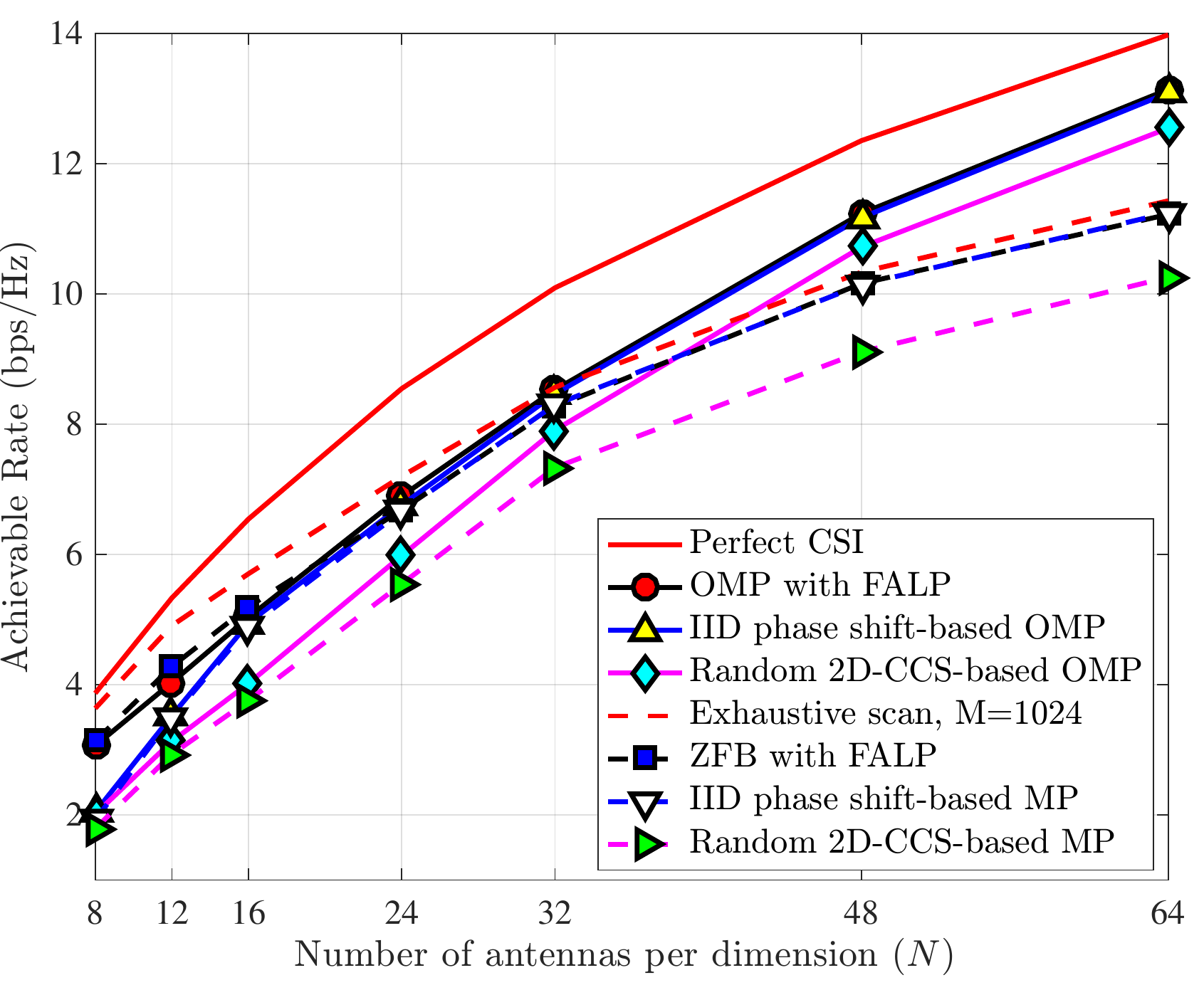}
  \caption{Achievable rate as a function of the number of antennas per dimension for $\rho=0.1$. A per-antenna power constraint of $-10\, \mathrm{dBm}$ was used for this simulation. Here, $N$ is chosen such that a perfect binary array of size $N \times N$ exists. FALP allows the use of low complexity CS algorithms, and is a promising solution for large arrays.}
  \label{fig:Rate_vs_Antt}
\end{figure}
\par The low complexity nature of CS algorithms in FALP makes it a promising solution for beam alignment in massive and low resolution phased arrays. In Fig.~\ref{fig:Rate_vs_Antt}, we show that CS-based beam alignment using FALP works well for a wide range of antenna dimensions. We would like to mention that FALP can be applied in one-bit phased arrays only when a perfect binary array exists, i.e., when $N=2^k$ or $N=3 \cdot 2^k$, for a natural number $k$. Designing perfect arrays for other combinations of phase shift resolution and antenna configurations, is an interesting research direction.
\section{Conclusions and future work}\label{sec:concl_fw}
In this paper, we have proposed FALP, a framework for compressive beam alignment or channel estimation using a perfect array-based codebook. The existence of perfect arrays over small alphabets allows fast and efficient compressed sensing in low resolution phased arrays through FALP. We have derived guarantees on channel reconstruction from sub-Nyquist sampling using FALP. In addition, we have shown how zero filling-based reconstruction in MRI can be used for rapid and low complexity beam alignment using a single 2D-FFT. 
\par FALP establishes a new platform to translate CS ideas from MRI to channel estimation or beam alignment in mmWave systems. As CS matrices in FALP can be parameterized by trajectories, k-space trajectories in MRI can be used for beam alignment or channel estimation. Furthermore, the trajectories can be optimized so that the CS matrix is robust to hardware non-idealities like CFO, phase noise, and frame synchronization errors. Investigating the performance of existing MRI trajectories in the beam alignment problem, and designing new trajectories are interesting research directions.
\section{Proof of Theorems} 
\subsection{Proof of Theorem \ref{theoremcondn}} \label{sec:proof_condn}
For the conditions in Theorem \ref{theoremcondn}, the reconstruction error in the masked beamspace obtained using the $\ell_1$-norm optimization program in \eqref{eq:l1opt} can be bounded as
\begin{equation}
\label{eq:msk_guarant}
\bigl\Vert\mathbf{S}-\hat{\mathbf{S}}\bigl\Vert_{F}\leq C_{1}\frac{\left\Vert \mathbf{S}-\left(\mathbf{S}\right)_{k}\right\Vert _{1}}{\sqrt{k}}+C_{2}N\sigma.
\end{equation}
The upper bound on the reconstruction error in \eqref{eq:msk_guarant} follows from Theorems 1 and 3 of \cite{kramer}. Using the spectral mask equation, i.e., $\mathbf{S}=\mathbf{X} \odot \mathbf{Z}$, and  \eqref{eq:msk_guarant}, we translate the guarantee on $\hat{\mathbf{S}}$ to the true beamspace estimate, i.e., $\hat{\mathbf{X}}$.
\par We first obtain an upper bound on the $\ell_1$ approximation error of the masked beamspace in \eqref{eq:msk_guarant}. We define $\Gamma \subseteq \mathcal{I}_N \times \mathcal{I}_N$ and its complement as $\Gamma ^{\mathrm{c}}$. The cardinality of $\Gamma$ is denoted by $|\Gamma|$. With the definition of $\mathbf{N}_{\Omega}$ in \eqref{eq:binary_mat}, we express the sparse approximation error $\Vert\mathbf{S}-(\mathbf{S})_{k}\Vert_{1}$ as
\begin{align}
\Vert\mathbf{S}-(\mathbf{S})_{k}\Vert_{1}&=\underset{\Gamma,|\Gamma|=k}{\mathrm{min}}\Vert\mathbf{S}-(\mathbf{S} \odot \mathbf{N}_{\Gamma})\Vert_{1}\\
\label{eq:l1_error_def}
&=\underset{\Gamma,|\Gamma|=k}{\mathrm{min}}\underset{(\ell,m)\in\Gamma^{\mathrm{c}}}{\sum}|\mathbf{S}(\ell,m)|.
\end{align}
From the spectral mask equation, we have $\mathbf{S}(\ell,m) = \mathbf{X}(\ell,m) \mathbf{Z}(\ell,m)$. As a result, $|\mathbf{S}(\ell,m)| \leq \Zmax |\mathbf{X}(\ell,m)|$. The $\ell_1$ approximation error in \eqref{eq:l1_error_def} is upper bounded as
\begin{align}
\label{eq:l1_error_def2}
\Vert\mathbf{S}-(\mathbf{S})_{k}\Vert_{1}& \leq \Zmax \underset{\Gamma,|\Gamma|=k}{\mathrm{min}}\underset{(\ell,m)\in\Gamma^{\mathrm{c}}}{\sum}|\mathbf{X}(\ell,m)|.
\end{align} 
Using the definition of the $\ell_1$ error in a $k$ sparse approximation of $\mathbf{X}$ in \eqref{eq:l1_error_def2}, we have $\Vert\mathbf{S}-(\mathbf{S})_{k}\Vert_{1} \leq \Zmax \Vert\mathbf{X}-(\mathbf{X})_{k}\Vert_{1}$.
\par Now, we derive a lower bound on the error in the masked beamspace estimate in terms of the error in the true beamspace estimate. The squared error in $\hat{\mathbf{S}}$ is  $\Vert\mathbf{S}-\hat{\mathbf{S}}\Vert_{F}^{2}=\sum_{\ell,m}|\mathbf{Z}(\ell,m)(\mathbf{X}(\ell,m)-\mathbf{\hat{X}}(\ell,m))|^{2}$. By definition, $|\mathbf{Z}(\ell,m)|\geq \Zmin$ for every $\ell$ and $m$. Therefore, the error in the masked beamspace estimate is lower bounded as $\Vert\mathbf{S}-\hat{\mathbf{S}}\Vert_{F} \geq \Zmin \Vert\mathbf{X}-\hat{\mathbf{X}}\Vert_{F}$. The result in Theorem \ref{theoremcondn} follows by using $\Vert\mathbf{S}-(\mathbf{S})_{k}\Vert_{1} \leq \Zmax \Vert\mathbf{X}-(\mathbf{X})_{k}\Vert_{1}$ and $\Vert\mathbf{S}-\hat{\mathbf{S}}\Vert_{F} \geq \Zmin \Vert\mathbf{X}-\hat{\mathbf{X}}\Vert_{F}$ in \eqref{eq:msk_guarant}.   
\subsection{Proof of Theorem \ref{theorem_array_duality}} \label{sec:proof_theorem_array_duality}
The spectral mask $\mathbf{Z}$ is a scaled inverse 2D-DFT of $\mathbf{P}_{\mathrm{FC}}$, i.e., $\mathbf{Z}=N\mathbf{U}^{\ast}_N \mathbf{P}_{\mathrm{FC}} \mathbf{U}^{\ast}_N$. It can be observed that the unimodularity of $\mathbf{Z}$ is equivalent to the unimodularity of $\mathbf{Z}_{\mathrm{FC}}$, a flipped and conjugated version of $\mathbf{Z}$. Using the properties of the Fourier transform, it can be shown that $\mathbf{Z}_{\mathrm{FC}}=N \mathbf{U}_N \mathbf{P}\mathbf{U}_N$ \cite{imageprocess}. Now, we use the result that $N \mathbf{U}_N \mathbf{P}\mathbf{U}_N$ is unimodular if and only if $\mathbf{P}$ has perfect periodic spatial autocorrelation \cite[Sec. III-A]{luke1988sequences}. The result in Theorem \ref{theorem_array_duality} follows by putting these observations together.
\subsection{Proof of Theorem \ref{theorem_prob_bound}} \label{sec:proof_prob_bound}
From \eqref{eq:pexact}, the probability that $|(\mathbf{S}\circledast\mathbf{K}_{\mathrm{bl}})|$ achieves maximum at $(0,0)$ can be expressed as 
\begin{equation}
\label{eq:pcomplement}
p=1-\mathrm{Pr}\bigg(\underset{(r,c)\neq(0,0)}{\cup}|(\mathbf{S}\circledast\mathbf{K}_{\mathrm{bl}})_{0,0}|\leq|(\mathbf{S}\circledast\mathbf{K}_{\mathrm{bl}})_{r,c}| \bigg).
\end{equation}  
Note that $\mathbf{S}$ is non-zero only at the locations $(0,0)$ and $(r_o,c_o)$. In this section, we derive closed form expressions for $\mathrm{Pr}(|(\mathbf{S}\circledast\mathbf{K}_{\mathrm{bl}})_{0,0}|\leq|(\mathbf{S}\circledast\mathbf{K}_{\mathrm{bl}})_{r_{o},c_{o}}|)$ and $\mathrm{Pr}(|(\mathbf{S}\circledast\mathbf{K}_{\mathrm{bl}})_{0,0}|\leq|(\mathbf{S}\circledast\mathbf{K}_{\mathrm{bl}})_{r_1,c_1}|)$ for some $(r_1,c_1) \notin \{(0,0),(r_o,c_o)\}$. We then derive a lower bound on $p$ in \eqref{eq:pcomplement}.      
\par The matrix $|\mathbf{S}|$ is $1$ at $(0,0)$, $|a|$ at $(r_o,c_o)$, and $0$ at the remaining $N^2-2$ locations. Although $|\mathbf{S}|$ is maximum at $(0,0)$ for $|a|<1$, it is possible that $|(\mathbf{S}\circledast\mathbf{K}_{\mathrm{bl}})|$ achieves maximum at $(r_o,c_o)$. We define $p_1=\mathrm{Pr}(|(\mathbf{S}\circledast\mathbf{K}_{\mathrm{bl}})_{0,0}|\leq|(\mathbf{S}\circledast\mathbf{K}_{\mathrm{bl}})_{r_o,c_o}|)$. Using \eqref{eq:conv_compact}, $p_1$ can be expressed as 
\begin{align}
p_1&=\mathrm{Pr}(|1+a\mathsf{x}^{\ast}| \leq |a+\mathsf{x}|).
\end{align}
The inequality $|1+a\mathsf{x}^{\ast}| \leq |a+\mathsf{x}|$ is equivalent to $1+|a|^2|\mathsf{x}|^2+2\mathrm{Re}\{a\mathsf{x}^{\ast}\} \leq |a|^2 +|\mathsf{x}|^2 +2\mathrm{Re}\{a\mathsf{x}^{\ast}\}$ and can be simplified to $(1-|a|^{2})|\mathsf{x}|^{2}\geq1-|a|^{2}$. For any $|a|<1$, $p_1$ is given by 
\begin{align}
p_1&=\mathrm{Pr}(|\mathsf{x}|^2 \geq 1)\\
\label{eq:firstprob_term_pre1}
&=\mathrm{Pr}(2|\mathsf{x}|^2/ \xi^2 \geq 2/\xi^2).
\end{align}
As $\mathsf{x}\sim \mathcal{N}_{\mathrm{c}}(0,\xi^2)$ for a large $N$, $2|\mathsf{x}|^2/\xi^2$ follows the central $\chi^2$-distribution of order $2$. Therefore, $p_1$ in \eqref{eq:firstprob_term_pre1} can be expressed in terms of the Marcum Q-function \cite{chi_sq} as 
\begin{equation}
\label{eq:firstprob_term}
p_1=\mathcal{Q}_1(0,\sqrt{2}/\xi).
\end{equation}
An interesting observation from \eqref{eq:firstprob_term} is that $p_1$ is independent of the strength of the second best path, i.e., $|a|$. 
\par Now, we derive the probability that $|(\mathbf{S}\circledast\mathbf{K}_{\mathrm{bl}})_{0,0}|\leq|(\mathbf{S}\circledast\mathbf{K}_{\mathrm{bl}})_{r_1,c_1}|$ for some $(r_1,c_1)$ such that $\mathbf{S}(r_1,c_1)=0$. We define $p_2=\mathrm{Pr}(|(\mathbf{S}\circledast\mathbf{K}_{\mathrm{bl}})_{0,0}|\leq|(\mathbf{S}\circledast\mathbf{K}_{\mathrm{bl}})_{r_1,c_1}|)$. Using \eqref{eq:conv_compact}, $p_2$ can be expressed as
\begin{equation}
\label{eq:p2_temp1}
p_2=\mathrm{Pr}(|1+a\mathsf{x}^{\ast}| \leq |\mathsf{b}+a\mathsf{w}|).
\end{equation}
We assume that the variables $\mathsf{x}$, $\mathsf{b}$ and $\mathsf{w}$ are independent. With this assumption, it can be observed that $\mathsf{b}+a\mathsf{w} \sim \mathcal{N}_{\mathrm{c}}(0,(1+|a|^2) \xi^2)$ and $1+a\mathsf{x}^{\ast} \sim \mathcal{N}_{\mathrm{c}}(1,|a|^2 \xi^2)$. We use $\chi^2_{_{\mathrm{C}}}$ and $\chi^2_{_{\mathrm{NC}}}$ to denote the central and non-central chi-squared random variables of degree $2$ \cite{chi_sq}. The non-centrality parameter of $\chi^2_{_{\mathrm{NC}}}$ is set as $\lambda_{_{\mathrm{NC}}}= 2/{|a|^2 \xi^2}$. Using these definitions, it can be shown that $|\mathsf{b}+a\mathsf{w}|^2 \sim \xi^2(1+|a|^2)\chi^2_{_{\mathrm{C}}}/2$ and $|1+a\mathsf{x}^{\ast}|^2 \sim |a|^2\xi^2 \chi^2_{_{\mathrm{NC}}}/2$. We use $f(t)$ to denote the probability density of $\chi^2_{_{\mathrm{NC}}}$  at $t$. The probability in \eqref{eq:p2_temp1} is then 
\begin{align}
p_2 &=\mathrm{Pr}\left(\chi^2_{_{\mathrm{C}}} \geq \frac{|a|^2}{1+|a|^2}\chi^2_{_{\mathrm{NC}}} \right)\\
\label{eq:p_2pre_ccdf}
&=\int_{0}^{\infty}\mathrm{Pr}\left(\chi_{_{\mathrm{C}}}^{2}\geq\frac{|a|^{2}}{1+|a|^{2}}t\right)f(t)dt.
\end{align} 
We use the complementary cumulative distribution function of $\chi^2_{_{\mathrm{C}}}$ to express \eqref{eq:p_2pre_ccdf} as 
\begin{align}
\label{eq:p_2pre_mgf}
p_2 &= \int_{0}^{\infty} \mathrm{exp}\left(\frac{-|a|^{2}t}{2(1+|a|^{2})}\right)f(t)dt.
\end{align}
It can be observed from \eqref{eq:p_2pre_mgf} that $p_2$ is the moment generating function of $\chi^2_{_{\mathrm{NC}}}$ \cite{chi_sq} evaluated at $-|a|^{2}/(2+2|a|^2)$. The probability in \eqref{eq:p_2pre_mgf} is then 
\begin{align}
\label{eq:prob_2}
p_2 &= \frac{1+|a|^{2}}{1+2|a|^{2}}\mathrm{exp}\left(\frac{-1}{(1+2|a|^{2})\xi^{2}}\right).
\end{align}
As a sanity check, it can be observed that $p_2=p_1$ for $a=0$.
\par We use $p_1$ and $p_2$ in \eqref{eq:firstprob_term} and \eqref{eq:prob_2}, to derive a lower bound on \eqref{eq:pcomplement}. The probability that $|\mathbf{S} \circledast \mathbf{K}_{\mathrm{bl}}|$ does not achieve its maximum at $(0,0)$ can be upper bounded using a union bound as
\begin{align}
\label{eq:1_p_temp1}
1-p & \leq \underset{(r,c)\neq(0,0)}{\sum}\mathrm{Pr}(|(\mathbf{S}\circledast\mathbf{K}_{\mathrm{bl}})_{0,0}|\leq|(\mathbf{S}\circledast\mathbf{K}_{\mathrm{bl}})_{r,c}|).
\end{align}
The right hand side in \eqref{eq:1_p_temp1} comprises of two distinct terms that are $p_1=\mathrm{Pr}(|(\mathbf{S}\circledast\mathbf{K}_{\mathrm{bl}})_{0,0}|\leq|(\mathbf{S}\circledast\mathbf{K}_{\mathrm{bl}})_{r_o,c_o}|)$, and $p_2=\mathrm{Pr}(|(\mathbf{S}\circledast\mathbf{K}_{\mathrm{bl}})_{0,0}|\leq|(\mathbf{S}\circledast\mathbf{K}_{\mathrm{bl}})_{r_1,c_1}|)$ for any $(r_1,c_1) \notin \{ (0,0),(r_{o},c_{o})\} $. As there are $N^2-2$ terms of the second kind, we can write 
\begin{align}
\label{eq:1_p_temp2}
1-p \leq p_1 + (N^2-2)p_2.
\end{align}
The result in Theorem \ref{theorem_prob_bound} follows by substituting \eqref{eq:firstprob_term} and \eqref{eq:prob_2} in \eqref{eq:1_p_temp2}.
\bibliographystyle{IEEEtran}
\bibliography{refs}
\begin{IEEEbiography}[{\includegraphics[width=1in,height=1.25in,clip,keepaspectratio]{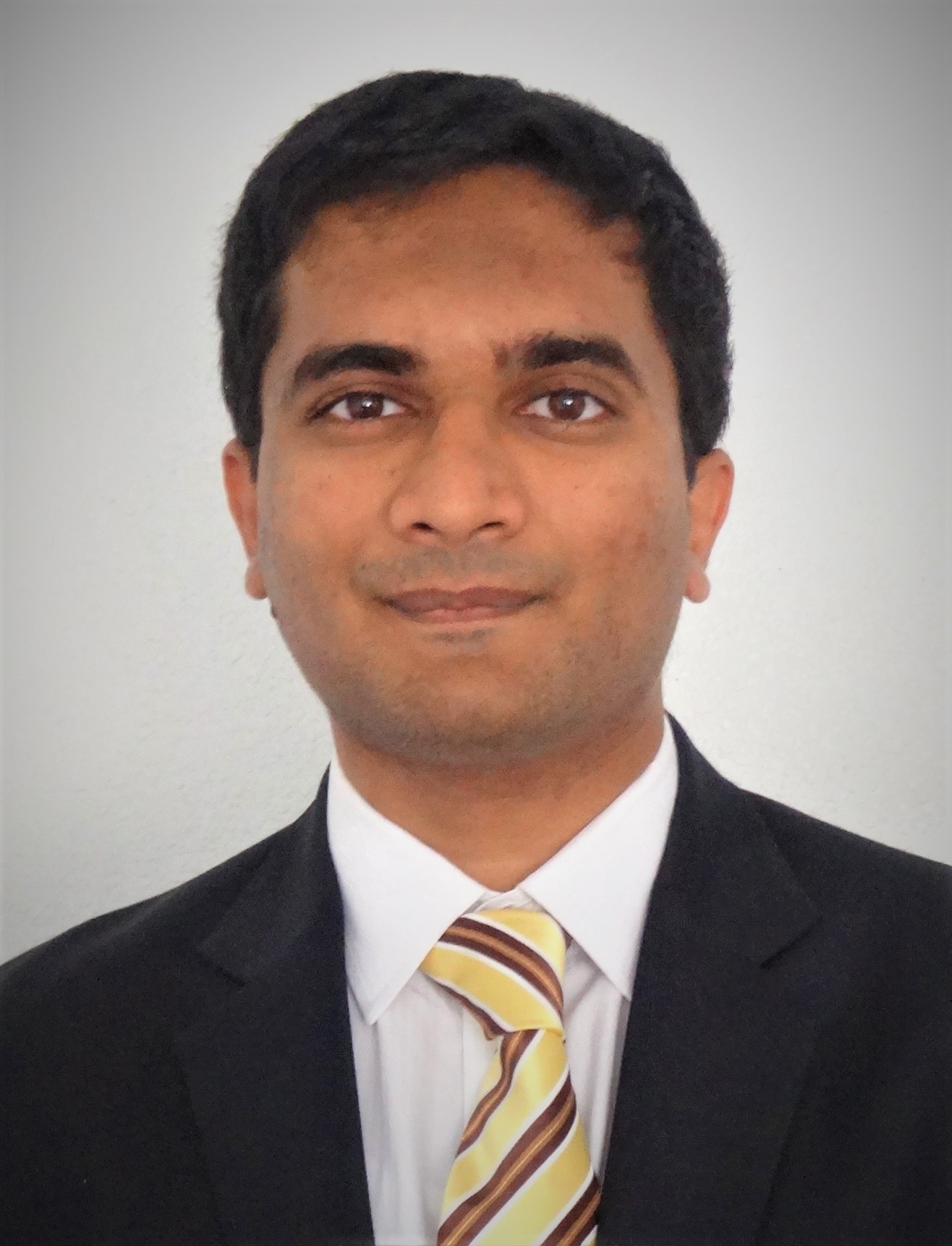}}]%
{Nitin Jonathan Myers} (S'17) received the B.Tech. and M.Tech. degrees in electrical engineering from the Indian Institute of Technology (IIT) Madras in 2016. He is currently pursuing the Ph.D. degree with The University of Texas at Austin. His research interests lie in the areas of wireless communications and signal processing. During his undergraduate days at IIT Madras, he received the DAAD WISE Scholarship, in 2014; and the Institute Silver Medal, in 2016. Mr. Myers is a recipient of the University Graduate Continuing Fellowship 2019-2020. He also received the 2018 and 2019 Electrical and Computer Engineering (ECE) Research Awards, and 2018 ECE Professional Development Award from the Cockrell School of Engineering, The University of Texas at Austin. 
\end{IEEEbiography}
\begin{IEEEbiography}[{\includegraphics[width=1in,height=1.25in,clip,keepaspectratio]{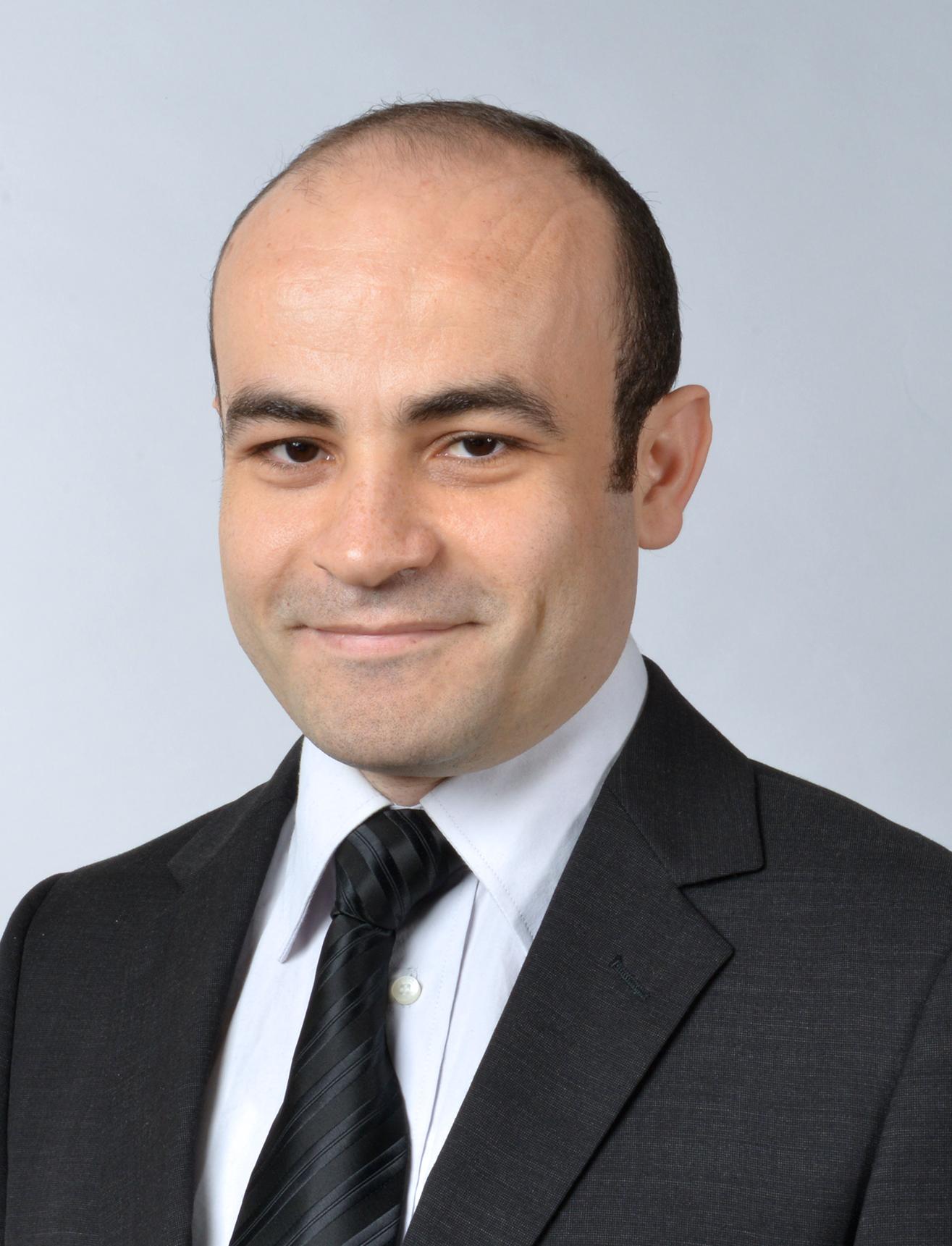}}]%
{Amine Mezghani} (S'08 - M'16) received the Ph.D. degree in Electrical Engineering from the Technical University of Munich, Germany in 2015. He is currently an Assistant Professor in the Department of Electrical and Computer Engineering at University of Manitoba, Canada. He was as a Postdoctoral Fellow at the University of Texas at Austin, USA, while contributing to this work. Prior to this, he was a Postdoctoral Scholar with the Department of Electrical Engineering and Computer Science, University of California, Irvine, USA. His current research interests include millimeter-wave communications, massive MIMO, hardware constrained radar and communication systems, antenna theory and large-scale signal processing algorithms. He was the recipient of the joint Rohde \& Schwarz and EE department Outstanding Dissertation Award in 2016.  He has published about hundred papers, particularly on the topic of signal processing and communications with low resolution analog-to-digital and digital-to-analog converters.
\end{IEEEbiography}
\begin{IEEEbiography}[{\includegraphics[width=1in,height=1.25in,clip]{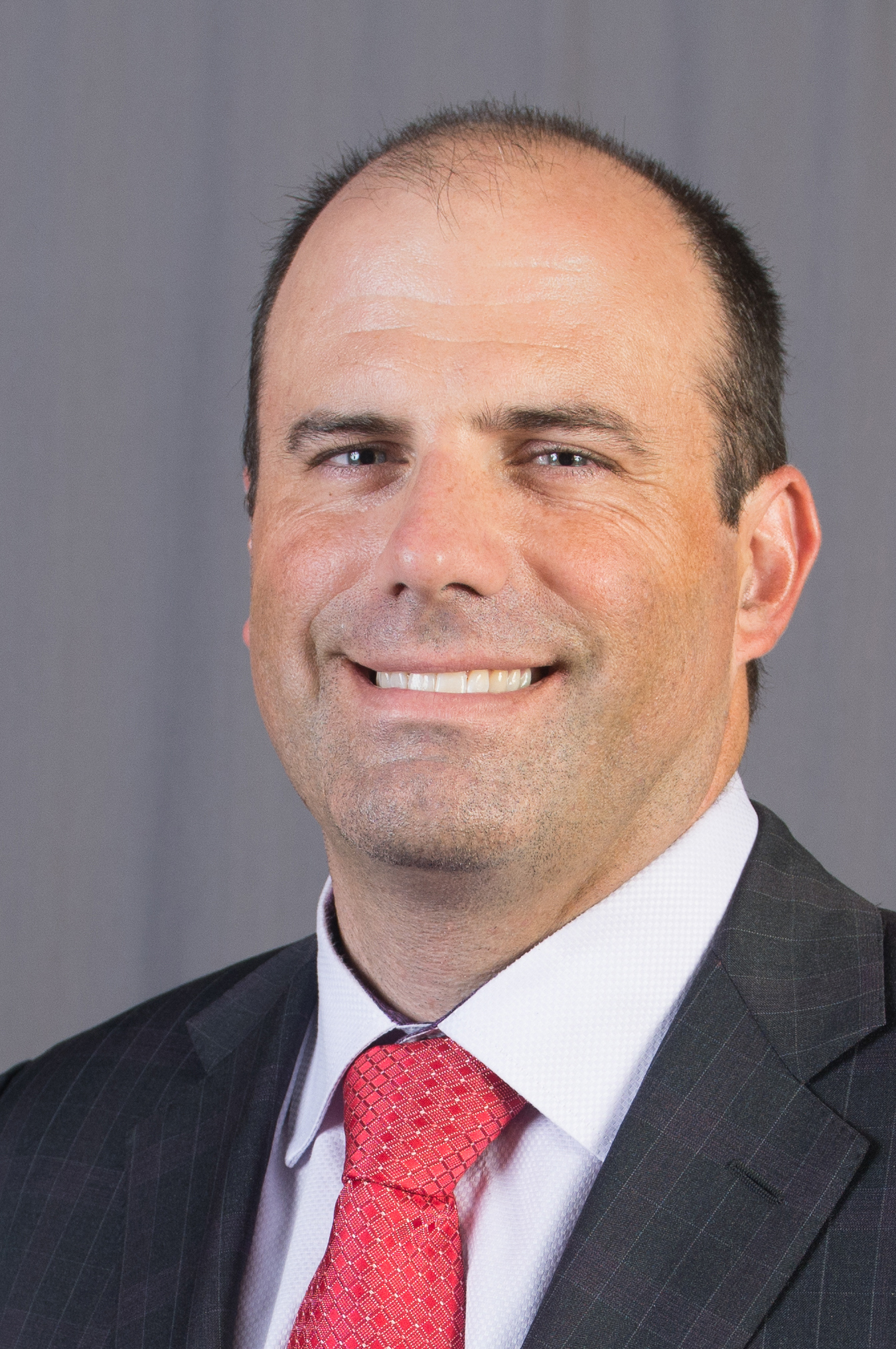}}]%
{Robert W. Heath Jr. } 
(S'96 - M'01 - SM'06 - F'11)  received the B.S. and M.S. degrees from the University of Virginia, Charlottesville, VA, in 1996 and 1997 respectively, and the Ph.D. from Stanford University, Stanford, CA, in 2002, all in electrical engineering. From 1998 to 2001, he was a Senior Member of the Technical Staff then a Senior Consultant at Iospan Wireless Inc, San Jose, CA where he worked on the design and implementation of the physical and link layers of the first commercial MIMO-OFDM communication system. Since January 2002, he has been with the Department of Electrical and Computer Engineering at The University of Texas at Austin where he is a Cockrell Family Regents Chair in Engineering, founder of the Situation-Aware Vehicular Engineering Systems initiative, and is a Member of the Wireless Networking and Communications Group. He is also President and CEO of MIMO Wireless Inc. He authored ``Introduction to Wireless Digital Communication’' (Prentice Hall, 2017) and ``Digital Wireless Communication: Physical Layer Exploration Lab Using the NI USRP’' (National Technology and Science Press, 2012), and co-authored ``Millimeter Wave Wireless Communications’' (Prentice Hall, 2014) and ``Foundations of MIMO Communication’' (Cambridge University Press, 2019).
\par Dr. Heath has been a co-author of a number award winning conference and journal papers including  the 2010 and 2013 EURASIP Journal on Wireless Communications and Networking best paper awards, the 2012 Signal Processing Magazine best paper award, a 2013 Signal Processing Society best paper award, 2014 EURASIP Journal on Advances in Signal Processing best paper award, the 2014 and 2017 Journal of Communications and Networks best paper awards, the 2016 IEEE Communications Society Fred W. Ellersick Prize, the 2016 IEEE Communications and  Information Theory Societies Joint Paper Award, the 2017 Marconi Prize Paper Award, and the 2019 IEEE Communications Society Stephen O. Rice Prize. He received the 2017 EURASIP Technical Achievement award  and is co-recipient of the 2019 IEEE Kiyo Tomiyasu Award. He was a distinguished lecturer and member of the Board of Governors in the IEEE Signal Processing Society. In 2017, he was selected as a Fellow of the National Academy of Inventors. He is also a licensed Amateur Radio Operator, a Private Pilot, a registered Professional Engineer in Texas. He is currently Editor-in-Chief of IEEE Signal Processing Magazine.
\end{IEEEbiography}
\end{document}